
\documentclass[12pt]{article}

\usepackage{amsmath,amssymb,amsthm}
\usepackage{graphics,epsfig,calc}
\usepackage{amsmath}
\usepackage{latexsym,epsfig,bm,amssymb}
\usepackage{color}
\usepackage{amsthm,mathrsfs}
\usepackage{mathptmx}

 \textwidth 176mm     
 \textheight 230mm     
 \topmargin -5mm   
  
 \oddsidemargin -10mm     
 \evensidemargin 0pt     
 \newcommand{\beqn}{\begin{eqnarray}}     
 \newcommand{\eeqn}{\end{eqnarray}}     
 \newcommand{\be}{\begin{equation}}     
 \newcommand{\ee}{\end{equation}}     
 \newcommand{\ba}{\begin{array}}     
 \newcommand{\ea}{\end{array}}     
 \newcommand{\pa}{\partial}     
 \newcommand{\re}{\ref}     
 \newcommand{\ci}{\cite}     
 \newcommand{\la}{\label}     
 \newcommand{\bfr}{\begin{flushright}}     
 \newcommand{\efr}{\end{flushright}}     
 \newcommand{\bfl}{\begin{flushleft}}     
 \newcommand{\efl}{\end{flushleft}}     
 \newcommand{\fr}{\frac}     
 \newcommand{\ov}{\overline}     
\newcommand{\Om}{\Omega}

\newcommand{\na}{\nabla}
  
\def\o{\mathaccent"7017}
\newcommand{\Ho}{\o{H}}

\newcommand{\cA}{\cal A}
\newcommand{\cE}{\cal E}

\newcommand{\cS}{\cal S}

\newcommand{\om}{\omega}
\newcommand{\al}{\alpha}   
\newcommand{\si}{\sigma}     
\newcommand{\lr}{\longrightarrow}

\newcommand{\st}{\stackrel}
\newcommand{\tocEF}{\st{{\cal E}_F}\lr}

\newcommand{\tocF}{\st{{\cal F}}\lr}

 \newcommand{\ve}{\varepsilon}    
 \newcommand{\vp}{\varphi}  
 \newcommand{\de}{\delta} 
\newcommand{\ds}{\displaystyle} 
 \newcommand{\De}{\Delta}     
      
 \newcommand{\br}{|\kern-.25em|\kern-.25em|}     
    
      \def\N{{\rm I\kern-.1567em N}}                              
 \def\R{{\rm I\kern-.1567em R}}                              
 \def\C{{\rm C\kern-4.7pt                                    
 \vrule height 7.7pt width 0.4pt depth -0.5pt \phantom {.}}}     
 \def\Z{{\sf Z\kern-4.5pt Z}}                                

 \newtheorem{theorem}{Theorem}[section]      
      
\newtheorem{definition}[theorem]{Definition}     
     
 \newtheorem{lemma}[theorem]{Lemma}

 \newtheorem{remarks}[theorem]{Remarks}     
 \newtheorem{cor}[theorem]{Corollary}     
 \newtheorem{pro}[theorem]{Proposition}

      \newcommand{\brs}{\begin{remarks}}
 \newcommand{\ers}{\end{remarks}}

 \begin{document}   

 \begin{titlepage}     
 \begin{center}     

\hfill {\it To the memory of Vladimir Buslaev} 
\vspace{3cm}

 {\Large\bf On global attractors and radiation damping for 
  \medskip\\     
 nonrelativistic particle coupled to scalar field}\\     
  \vspace{1cm}
{\large A. Komech}
\footnote{
The research was carried out at the IITP RAS at 
the expense of the Russian Foundation for Sciences 
(project 14-50-00150).
}
\\
{\it Faculty of  Mathematics Vienna University\\
and Institute for Information Transmission Problems RAS}\\
 Email:~alexander.komech@univie.ac.at
    \\ 
~\\
  {\large E. Kopylova}$\,^1$
\\
{\it Faculty of  Mathematics Vienna University\\
and Institute for Information Transmission Problems RAS}\\
Email:~elena.kopylova@univie.ac.at
\\
~\\
{\large H. Spohn}
\\
{\it Faculty of Mathematics, TU Munich.
}\\
Email: spohn@ma.tum.de    
\end{center}     
 \date{March, 2013}     
 \vspace{2cm}
{\bf Abstract} 
We consider the Hamiltonian system of scalar wave field  and a single nonrelativistic 
particle coupled in a translation invariant manner. The particle is also subject 
to a confining external potential. The stationary solutions of the system are a Coulomb 
type wave field centered at those particle positions for which the external force vanishes.     
We prove that solutions of finite energy converge, in suitable local energy seminorms, 
to the set ${\cal S}$ of all stationary states in the long time limit $t\to\pm\infty$.
Further we show that
the rate of relaxation to a stable stationary state is determined by spatial decay 
of initial data. The convergence is followed by the radiation of the dispersion wave 
which is a solution to the free wave equation.

Similar relaxation has been proved previously for the case of relativistic particle 
when the speed of the particle is less than the speed of light. Now we extend these 
results to nonrelativistic particle with arbitrary superlight velocity.
However, we restrict ourselves by the plane particle trajectories in $\R^3$.
The extension to general case remains an open problem.
\end{titlepage}
\setcounter{equation}{0}     
\section{Introduction}
We consider the Hamiltonian system  of a real scalar field  $\varphi(x)$ on $\R^3$, 
and an extended nonrelativistic particle with the center position $q\in\R^3$ and with 
the charge density $\rho(x-q)$.     
The field is governed by the wave equation with a source. 
The particle is subject to the wave field and also to an external potential $V$,     
which is confining in the sense of (\re{P}).
The interaction between the particle and the scalar field is local,     
translation invariant, and linear in the field. We study the long-time 
behavior of the coupled system. Our main results are the asymptotics 
\be\la{dq-lim}
\dot q(t)\to 0,\quad\quad \ddot q(t)\to 0,\quad t\to\pm\infty,
\ee
and the convergence of the field to the corresponding Coulombic potential.
Moreover, we establish the rate of the convergence in the case when  $q_\pm$ 
is a nondegenerate local minimum of the  potential $V$.

Let $\pi(x)$ be the canonically conjugate field to $\varphi(x)$ and let $p$ be the momentum 
of the particle. The Hamiltonian (energy functional) reads then
\be\la{WP1}
{\cal H}(\varphi,q,\pi,p)\equiv\fr 12 p^2+V(q)+
\fr 12\int (|\pi(x)|^2+|\nabla\varphi(x)|^2)\, dx+\int \varphi(x)\rho(x-q)\, dx.
\ee
Taking formally variational derivatives in (\re{WP1}), the coupled dynamics becomes
\beqn\la{WP2}
\left.
\ba{ll}
\dot \varphi(x,t)=\pi(x,t),&\dot\pi(x,t)=\Delta\varphi(x,t)-\rho(x-q(t)),\\
~\\
\dot q(t)= p(t),&\dot p(t)=-\nabla V(q(t))+\ds\int \varphi(x,t)\nabla\rho(x-q(t)) dx.
\ea
\right|
\eeqn
For smooth $\varphi(x)$ vanishing at infinity the Hamiltonian
can be rewritten as 
\be\la{WP1r}
{\cal H}(\varphi,q,\pi,p)\equiv\fr 12 p^2+V(q)+
\fr 12\int (|\pi(x)|^2+|\nabla[\varphi(x)-\De^{-1}\rho(x-q)]|^2)\, dx+
\fr 12\langle\rho,\De^{-1}\rho\rangle,
\ee
where 
\begin{equation}\la{WPHB}
\fr 12\langle\rho,\De^{-1}\rho\rangle= 
-\fr 1 {8\pi}\int\int\fr{\rho(x)\rho(y)}{|x-y|}\,dxdy\le 0.
\end{equation}
Thus the energy  (\re{WP1r}) is bounded from below if $|\langle\rho,\Delta^{-1}\rho\rangle|<\infty$ 
which provides a priori estimates for solutions to (\re{WP2}), and hence guarantees 
the existence of global solutions. Otherwise, the dynamics is not well defined. 
For example, $\langle\rho,\Delta^{-1}\rho\rangle=-\infty$ for the point particle with $\rho(x)=\de(x)$: 
\be\la{ufd}
\langle\de,\Delta^{-1}\de\rangle=-(2\pi)^{-3}\int\fr{1}{k^2}dk=-\infty.
\ee
This ``ultraviolet divergence''  was discovered first for the point particle 
in classical electrodynamics, where $-\langle\rho,\Delta^{-1}\rho\rangle$ is proportional 
to the energy of the particle in its own electrostatic field. 
Respectively, the infinite energy (\re{ufd}) for the point particle is not satisfactory 
since it also means its infinite mass. This  infinity inspired the introduction 
of the ``extended electron'' by Abraham \ci{Abr2}.
Our system (\re{WP2}) is a scalar analog of the Abraham electrodynamics
with the extended electron \ci{KS00, Sp04}.
\medskip

The stationary solutions for (\re{WP2}) are easily determined. Denote
\be\la{sq}
s_{q}(x) =-\int\fr{\rho(y-q)}{4\pi|y-x|}dy,\quad x, q\in\R^3.
\ee
Let $Z=\{q\in\R^3: \nabla V(q)=0\}$ be the set of critical points for
$V$. Then the set of all stationary states is given by
\be\la{WPss}
{\cal S}=\{(\varphi, \pi, q,  p)=(s_{q}, 0, q, 0):=S_{q}|~~q\in Z\}.
\ee
One natural goal is to investigate the domain of attraction for ${\cal S}$ 
and in particular to prove that each finite energy  solution of (\re{WP2}) 
converges to some
stationary states $S_{q_{\pm}}=(s_{q_{\pm}}, 0, q_{\pm}, 0)\in {\cal S}$ in
the long time limit $t\to\pm\infty$.

To state our main results we need some assumptions on $V$ and $\rho$. We assume that
\be\la{P}
V\in C^2(\R^3),
~~~~~~\lim_{|q|\to\infty} V(q)=\infty.
\ee
\be\la{C}
\rho\in C_0^\infty(\R^3),
 ~~~~~\rho(x)=0{\rm ~~for~~}|x|\geq R_\rho,~~~~\rho(x)=\rho_r(|x|).
\ee
Moreover, we suppose that the following Wiener condition holds:
\be\la{W}
 \hat\rho(k)\ne 0\mbox{ ~~~~~~for~~ }k\in\R^3.
\ee
It is an analogue of the Fermi Golden Rule:
the coupling term  $\rho(x-q)$ is not orthogonal to the eigenfunctions $e^{ikx}$ 
of the continuous spectrum of the linear part of the   equation (cf. \ci{Sig,SW3}).
As we will see, the Wiener condition (\re{W}) is very essential for our asymptotic analysis.
\medskip

For technical reasons, we restrict ourselves to the case when the particle moves in the plane,
i.e. we suppose that $q(t)=(q^1(t), q^2(t), q^3(t))\in R^3$ such that
\be\la{Q}
q^3(t)=0,~~~~~~t\in\R.
\ee
For example, this condition holds if initial fields $\varphi_0(x)=\varphi(x,0)$ and
$\pi_0(x)=\pi(x,0)$ are symmetric in $x^3$, and 
\be\la{qVc}
q^3(0)=p^3(0)=0~~{\rm and}~~\quad \pa_{x^3} V(x^1,x^2,0)=0,~~{\rm for }~~ (x^1,x^2)\in\R^2.
\ee

In the first part of the paper  we prove that the set $\cal S$ is an attracting set 
for each trajectory $Y(t)=(\varphi(t),\pi(t),q(t),p(t))$. Namely, we consider initial data 
$Y(0)=(\varphi_0,\pi_0,q_0,p_0)$
with 
\be\la{C1}
\varphi_0\in C^2(\R^3),\quad \pi_0\in C^1(\R^3)
\ee
 such that
\be\la{WP8}
|\nabla\varphi_0(x)|+|\pi_0(x)|+|x|(|\nabla\nabla\varphi_0(x)|+|\nabla\pi_0(x)|)
={\cal O}(|x|^{-\si}),\quad |x|\to\infty, ~~{\rm where}~~\si>3/2,
\ee
which guarantees the finiteness of the energy   (\re{WP1}).
First, we prove the relaxation (\re{dq-lim}).
Further, we prove the long-time attraction
\be\la{WP9n} 
Y(t)\to {\cal S},\quad t\longrightarrow \pm\infty.
\ee
where the convergence of the fields holds in local energy seminorms.
If additionally, the set ${\cal S}$ is discrete, then (\re{WP9n}) implies 
\be\la{WP10n}
Y(t)\to S_{q_\pm},\quad t\longrightarrow \pm\infty,
\ee
where the stationary states $S_{q_\pm}\in {\cal S}$ depend on the solution $Y(t)$ considered.
\medskip

In the second part of the paper we specify the rate of convergence in (\re{WP10n})
to a stationary state $S_{q_+}$ in the case 
 where $q_{*}\in Z$ is a non-degenerate
minimum of the potential, i.e.,
\be\la{d2V}
d^2V(q_+)>0.
\ee
where $d^2V(q_+)$ is the Hessian.
We suppose that the initial fields belong to the weighted  space $\Ho^1_{\al}\oplus L^2_{\al}$
with some $\al>1$ (see Definition \ref{Def1}).  Then  for any $\ve>0$
\be\la{qqi}
\dot q(t)={\cal O}(|t|^{-\al+\ve}), ~~ q(t)=q_++{\cal O}(|t|^{-\al+\ve}),~~
\Vert (\varphi(t), \pi(t))-(s_{q_+},0)\Vert_{\Ho^1_{-\al}\oplus L^2_{-\al}}={\cal O}(t^{-\al+\ve}),\quad t\to\infty. 
\ee 
Moreover,  in this case the scattering asymptotics  hold,
\be\la{Si} 
(\varphi(x,t), \pi(x,t))= (s_{q_+}, 0) +W(t)\Phi_+ +r(x,t). 
\ee 
Here $W(t)$ is the dynamical group of the free wave equation, $\Phi\in\Ho^1\oplus L^2$ is the corresponding asymptotic state, 
and 
\be\la{rmi} 
\Vert r(t)\Vert_{\Ho^1\oplus L^2}={\cal O}(|t|^{-\al+1+\ve}),\quad t\to\infty. 
\ee 
The investigation is inspired by fundamental
 problems of the field theory and quantum mechanics.
Namely,  the relaxation 
of the acceleration (\re{dq-lim}) is known as {\it radiation damping} in classical 
electrodynamics since Lorentz and Abraham  \ci{Abr2}, however it was proved for the first time 
in \ci{KS00,KSK} for the case of relativistic particle with $\dot q=p/\sqrt{p^2+1}$.
Second, the asymptotics (\re{WP10n}) give a dynamical model of Bohr's transitions
to quantum stationary states, see  the details in \ci{K2013,K2016}.

Our extension to the nonrelativistic particle is not straightforward and important 
in connection with the Cherenkov radiation.
The main difficulty is due to the singular nature of the radiation for $|\dot q(t)|\ge 1$.

Traditionally  the classical 
Larmor and Li\'enard formulas
\ci[(14.22)]{Jackson1962} and \ci[(14.24)]{Jackson1962}
are accepted for
the power of radiation of a point particle.
These formulas contain the factor $(1-\beta  \cdot \om)^{-3}$ (cf. our formula (\re{Larmor}))
where $\beta=v/c$ and $\om$ is the direction of the radiation.
Here $v=\dot q(\tau)$ is the particle velocity at the ``retarded time'' $\tau$
and  $c$ is the  propagation speed of the wave field in the dispersive medium.
These formulas are deduced from the Li\'enard-Wiechert expressions for the retarded potentials
neglecting the initial fields. Moreover, 
these formulas neglect the back fieldreaction
though it should be the key reason for the relaxation. The main problem is that this back 
field-reaction is infinite for the point particles. In (\re{WP2}) we set $c=1$.
Generally $c$ is less than the speed of light in vacuum, so the particle velocities $\dot q(t)>1$
are possible. Then the factor $(1-\beta  \cdot \om)^{-3}$ in the Larmor formula becomes infinite 
for some directions $\om$.

A rigorous meaning to these calculations for relativistic particle has been suggested first in 
\ci{KS00,KSK} for the Abraham model of the "extended electron" under the Wiener condition (\re{W}). 
The survey can be found in \ci{Sp04}.

For the nonrelativistic Abraham type model (\re{WP2})  with the ``extended electron'' the radiation 
remains finite due to the smoothing by the coupling function $\rho$. Nevertheless, 
the case $|\dot q(t)|>1$ rises many open questions.
\medskip

Our main novelties in present paper are the following.
\medskip\\
I. Global attraction of finite energy solutions 
to stationary states for the case of  
nonrelativistic  particle.
\medskip\\
II.  Asymptotics (\re{qqi})--(\re{rmi})
 in the weighted Sobolev norms for the case
of nonrelativistic  particle.
\medskip

Let us comment on previous results in these directions. 
The global attractions (\re{WP9n}) 
and  (\re{WP10n}) were proved in \ci{KSK, KS00} for the system of type 
(\re{WP2}) with relativistic particle
and for the  similar Maxwell-Lorentz system.
In \ci{KS1998} the global attraction to solitons was proved for the system (\re{WP2}) without 
external potential  under the Wiener condition (\re{W}). In \ci{IKM2004}
this result was extended to similar Maxwell-Lorentz system.
In \ci{IKS2004a}--\ci{IKS2004b} the global attraction to solitons is proved for  the system 
(\re{WP2}) and similar systems with the Klein-Gordon and Maxwell equations with
small $\rho$.
In \ci{K2003}--\ci{KK2010b} the global attraction to solitary waves
is proved for the Klein--Gordon and Dirac equations coupled to $U(1)$-invariant nonlinear oscillators. 

The asymptotics of type (\re{qqi})--(\re{rmi}) were established in  \ci{KSK} for the case of 
relativistic  particle in local energy seminorms for initial fields with compact support.
In  \ci{KK16} we have proved the asymptotic stability of the stationary states  
for the system (\re{WP2}) in the weighted Sobolev norms. 

In a series of papers, Egli, Fr\"ohlich, Gang, Sigal, and Soffer  have established the
convergence to a~soliton  for the system of type (\re{WP2}) with the Schr\"odinger equation 
instead of the wave equation. The main result of \cite{FG2014} is the long time convergence 
to a soliton with a subsonic speed for initial solitons with supersonic speeds.
The convergence is considered as a reason for the Cherenkov radiation, 
see \cite{FG2014} and the references therein. 

The asymptotics of type (\re{Si}) were proved by Soffer and Weinstein 
for nonlinear Schr\"odinger equations with a potential \ci{SW1990,SW1992},
and for translation invariant   nonlinear Schr\"odinger equations by Buslaev,
Perelman and Sulem \ci{BP1993,BP1995,BS2003}.
\medskip

Now let us comment on our methods. For the  proof of (\re{dq-lim}) we estimate 
the energy dissipation by decomposing $\varphi$ into a near and far field. 
Energy is radiated in the far field. Since the Hamiltonian is bounded from below, such radiation
cannot go on forever and a certain "energy radiation functional" has to be bounded.
This radiation functional can be written as a convolution. By a Wiener  Tauberian Theorem, 
using (\re{W}), we conclude (\re{dq-lim}) for $\ddot q$. Therefore (\re{dq-lim}) also 
holds for $\dot q$ since $|q(t)|$ is bounded by some $\ov q_0<\infty$ due to (\re{P}).
Finally, we deduce (\re{WP9n}) and (\re{WP10n}) from (\re{dq-lim}) and 
integral representations for the fields.
This strategy is close to \ci{KSK,KS00,Sp04}, however, the singularity of the radiation 
at $|\dot q(t)|\ge 1$ requires suitable modifications in application 
of the Wiener Tauberian Theorem. We suggest the modification for  the plane particle 
trajectories (\re{Q}). The extension to general case remains an open problem.

We prove the asymptotics  (\re{qqi})--(\re{rmi})  by 
a development of the methods of \ci{KSK} and 
controlling the nonlinear part 
of (\re{WP2}) by the dispersion decay for the linearized equation which we established in
 \ci{KK16}.
Let us emphasize however, that the  
asymptotics (\re{qqi})--(\re{rmi}) are  quite different 
from the asymptotic stability proved in  \ci{KK16}.

\medskip
 
The plan of our paper is as follows.
In \S \ref{exdyn} we introduce appropriate functional spaces  and  
formulate our main results.
In \S \ref{LW-as} we refine known results on the long range asymptotics  of the 
 Li\'enard-Wiehert potentials. In \S\ref{scat-sec} we calculate the energy radiation integral.
 We use this formula in \S\ref{relax-sec} to prove the velocity relaxation.
 In \S\ref{Tr-sec} we prove the attraction to stationary states.
 In \S \ref{lin-sec}   we consider  the linearization at stationary state.
 In \S\ref{nH-sec} we prove a version of strong  Huygens principle for nonlinear system (\re{WP2}).
 In  \S\S \ref{rc-sec}--\ref{as-sec} we deduce the asymptotics  (\re{qqi})--(\re{rmi}).
 
 \setcounter{equation}{0}
 \section{Existence of dynamics and main results}\label{exdyn}
 We consider the Cauchy problem for the Hamiltonian system (\re{WP2})
which can be written as
\be\la{WP2.1}
\dot Y(t)=F(Y(t)),~~t\in\R,~~Y(0)=Y_0.
\ee
 Here $Y(t)=(\varphi(t), \pi(t), q(t), p(t))$, $Y_0=(\varphi_0, \pi_0, q_0, p_0)$, and
all derivatives are understood in the sense of distributions.

Now we  introduce a suitable phase space. Let $L^2$ be the real Hilbert space $L^2({\R}^3)$ with scalar product
$\langle\cdot,\cdot\rangle$ and norm $\Vert\cdot\Vert_{L^2}$,  and let $H^1$ denote the Sobolev space
$H^1=\{\psi\in L^2:\,|\nabla\psi|\in L^2\}$ with the norm 
$\Vert\psi\Vert_{H^1}=\Vert\nabla\psi\Vert_{L^2}+\Vert\psi\Vert_{L^2}$.
For $\alpha\in\R$ 
let us define
by $L^2_{\alpha}$ 
  the weighted Sobolev spaces $L^2_{\alpha}$ 
with the norms $\Vert\psi\Vert_{L^2_{\alpha}}:=\Vert(1+|x|)^{\alpha}\psi\Vert_{L^2}$.

Denote by $\Ho^1$ the completion of real space $C_0^\infty(\R^3)$ with the 
norm $\Vert\na\varphi(x)\Vert_{L^2}$.
Equivalently, using Sobolev's embedding theorem,
$\Ho^1=\{\varphi(x)\in L^6(\R^3):~~|\nabla\varphi(x)|\in L^2\}$.
Denote by $\Ho^1_\alpha$ the completion of real space 
 $C_0^\infty(\R^3)$ with the 
norm $\Vert(1+|x|)^{\alpha}\na\varphi(x)\Vert_{L^2}$.

For any $R>0$ denote by $\Vert\varphi\Vert_{L^2(B_R)}$ the norm in $L^2(B_R)$,
where $B_R=\{x\in\R^3: |x|\le R\}$. Then the seminorms
$\Vert\varphi\Vert_{H^1(B_R)}=\Vert\nabla\varphi\Vert_{L^2(B_R)} +\Vert\varphi\Vert_{L^2(B_R)}$ 
are continuous on $\Ho^1$.

\begin{definition}\la{Def1}
 i) The phase space $\cE$ is the real Hilbert space $\Ho^1\oplus L^2\oplus {\R}^3\oplus {\R}^3$ of states
$Y=(\psi ,\pi ,q,p)$ with the finite norm
$$
\Vert Y\Vert_{\cE}=\Vert \na\psi \Vert_{L^2} +\Vert\pi \Vert_{L^2}+|q|+|p|.
$$
ii) ${\cal E}_F$ is the space ${\cal E}$ endowed with the Fr\'echet topology defined by the local energy seminorms
\be\la{WP6}
{\Vert Y\Vert}_R=\Vert\varphi\Vert_{H^1(B_R)}+ \Vert\pi\Vert_{L^2(B_R)} + |q|+ |p|, \quad\forall R>0.
\ee
iii) ${\cal E}_{\alpha}$ with $\alpha\in\R$ is the Hilbert space $\Ho^1_{\alpha}\oplus L^2_{\alpha}\oplus {\R}^3\oplus {\R}^3$
with the norm
\be\la{alfa}
\Vert Y\Vert_{\alpha}=\Vert \,Y\Vert_{{\cal E}_{\alpha}}
=\Vert \nabla\psi \Vert_{L^2_\alpha} +\Vert\pi \Vert_{L^2_\alpha}+|q|+|p|.
\ee
iv) ${\cal F}_{\alpha}$ is the space $\Ho^1_{\alpha}\oplus L^2_{\alpha}$ of fields $F =(\psi ,\pi )$ with the finite norm
\be\la{Falfa}
\Vert \,F\Vert_{\alpha}=\Vert \,F\Vert_{{\cal F}_{\alpha}}=
\Vert \nabla\psi \Vert_{L^2_\alpha} +\Vert\pi \Vert_{L^2_\alpha}.
\ee
\end{definition}
Note that we use the same notation for the norms in the space $\mathcal{F}_{\alpha}$ as in the space
$\mathcal{E}_{\alpha}$ defined in (\ref{alfa}). We hope it will not create misunderstandings since it will be always clear 
from the context if we deal with fields only, and therefore with the space $\mathcal{F}_{\alpha}$, 
or with fields-particles, and therefore with elements of the space $\mathcal{E}_{\alpha}$.

Note that both spaces ${\cal E}_F$ and ${\cal E}$ are metrisable, $\Ho^1$ is not contained in $L^2$ and for instance 
$\Vert S_q\Vert_{L^2}=\infty$. On the other hand, $S_q\in{\cal  E}$. Therefore, ${\cal E}$ is the space of finite energy states.
The Hamiltonian functional  (\re{WP1r}) is continuous on the space ${\cal E}$ and 
is bounded from below. 
In the point charge limit the lower bound tends to $-\infty$ by (\re{ufd}).
 \begin{lemma}\la{WPexistence} (see \ci [Lemma 2.1]{KSK})
Let conditions (\re{P}) and (\re{C}) hold. Then\\
 (i) For every $Y_0\in {\cal E}$ the Cauchy problem (\re{WP2.1}) has a unique solution $Y(t)\in C(\R, {\cal E})$.\\
 (ii) For every $t\in\R$ the map $U(t):Y_0\mapsto Y(t)$ is continuous both on ${\cal E}$ and on ${\cal E}_F$.\\
 (iii) The energy is conserved, i.e.
 \be\la{WPEC}
  {\cal H}(Y(t))={\cal H}(Y_0)~~~for~~t\in\R.
 \ee 
  (iv) The following a priori estimates hold
\be\la{aps}
\Vert Y(t)\Vert_{\cE} \le C(Y_0),\qquad t\in\R.
\ee 
(v) The time derivatives $q^{(k)}(t)$, $k=0,1,2,3$, are uniformly bounded, i.e.~there are constants $\ov q_k>0$, 
depending only on the initial data, such that
 \be\la{WP2.4} 
   |q^{(k)}(t)|\le \ov q_k~~~for~~t\in\R.
 \ee
 \end{lemma}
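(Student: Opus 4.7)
The plan is to reduce (\re{WP2.1}) to a fixed-point problem in $\cE$, use energy conservation together with the confinement hypothesis (\re{P}) to extend local solutions to all of $\R$, and finally differentiate the ODE for $p$ in order to control the higher derivatives of $q$.

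For (i)--(ii) I would write the mild formulation: the field part evolves via the free wave group $W(t)$ on ${\cal F}_0=\Ho^1\oplus L^2$ acting on $(\varphi,\pi)$, driven by the source $(0,-\rho(\cdot-q(s)))$; the particle part reduces to the ODE system
\[
q(t)=q_0+\int_0^t p(s)\,ds,\quad p(t)=p_0+\int_0^t\Bigl(-\na V(q(s))+\langle\varphi(s),\na\rho(\cdot-q(s))\rangle\Bigr)ds.
\]
Because $\rho\in C_0^\infty$ by (\re{C}), the map $q\mapsto\rho(\cdot-q)$ is Lipschitz into $L^2$; $V\in C^2$ by (\re{P}) is locally Lipschitz; and $W(t)$ is an isometry on ${\cal F}_0$. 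Hence the right-hand side is a contraction on $C([0,T],\cE)$ for some $T=T(\Vert Y_0\Vert_\cE)>0$, yielding a unique local mild solution. Continuity of $U(t)$ on $\cE$ then follows by Gronwall applied to the difference of two mild solutions, and continuity on $\cE_F$ follows from the finite propagation speed of $W(t)$, which localises $\Vert\cdot\Vert_R$ to data on $B_{R+|t|}$.

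Energy conservation (iii) is first checked for initial data of extra smoothness by computing $\fr{d}{dt}{\cal H}(Y(t))$ and substituting (\re{WP2}) termwise, all boundary integrals vanishing by decay; it is then extended to arbitrary $Y_0\in\cE$ by density and the continuity of ${\cal H}$ on $\cE$. For (iv) I would use the rewritten form (\re{WP1r}): since $\langle\rho,\De^{-1}\rho\rangle$ is finite by (\re{C}), one has ${\cal H}(Y)\ge\fr12 p^2+V(q)-C_\rho$. Together with (iii) and (\re{P}) this forces $|p(t)|\le \ov q_1$ and $V(q(t))\le C$, and hence $|q(t)|\le \ov q_0$. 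The remaining quadratic terms in (\re{WP1r}) give $\Vert\na(\varphi-\De^{-1}\rho(\cdot-q))\Vert_{L^2}+\Vert\pi\Vert_{L^2}\le C$; since $\na\De^{-1}\rho$ is in $L^2$ by (\re{C}), we obtain the a priori bound $\Vert Y(t)\Vert_\cE\le C(Y_0)$, and hence, together with (i), global existence.

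The main obstacle is (v). The bounds on $q$ and $\dot q=p$ follow directly from (iv). For $\ddot q$, the trick is to pull the gradient off $\rho$ by integration by parts,
\[
\int\varphi(x,t)\na\rho(x-q(t))\,dx=-\int\na\varphi(x,t)\,\rho(x-q(t))\,dx,
\]
so that Cauchy--Schwarz yields $|\ddot q|\le|\na V(q)|+\Vert\na\varphi\Vert_{L^2}\Vert\rho\Vert_{L^2}$, which is bounded uniformly by (iv). Differentiating once more gives
\[
\dddot q=-d^2V(q)\dot q+\int\pi(x,t)\na\rho(x-q(t))\,dx-\dot q\cdot\!\!\int\varphi(x,t)\na^2\rho(x-q(t))\,dx,
\]
and the same integration-by-parts trick reduces the last integrand to $\na\varphi$ paired with $\na\rho$. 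All resulting quantities are controlled by $\Vert Y\Vert_\cE$, $\ov q_0$, $\ov q_1$, $V\in C^2$ and $\rho\in C_0^\infty$, giving the remaining $\ov q_2,\ov q_3$.
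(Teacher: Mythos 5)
The paper itself gives no proof of this lemma: it is quoted with the reference \ci[Lemma 2.1]{KSK}, so there is no internal argument to compare yours against. Your reconstruction is the standard one and is essentially what the cited reference does: local well-posedness by contraction for the mild formulation, globalization and part (iv) via the lower bound on the Hamiltonian in the form (\re{WP1r}) (using that $\langle\rho,\De^{-1}\rho\rangle$ is finite and $V$ is confining by (\re{P})), and part (v) by moving the gradient from $\rho$ onto $\varphi$, so that $\ddot q$ and $\dddot q$ are controlled by $\Vert\na\varphi\Vert_{L^2}$, $\Vert\pi\Vert_{L^2}$, $\Vert\rho\Vert_{H^1}$ and $\ov q_0,\ov q_1$. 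Two places deserve more care than your sketch gives them. First, continuity of $U(t)$ on ${\cal E}_F$ is not an immediate consequence of finite propagation speed: convergence of initial data in the Fr\'echet topology does not give a uniform bound on their energies, so you cannot invoke (iv) to confine the approximating trajectories uniformly; instead one runs a Gronwall estimate on the fixed finite interval $[0,t]$, exploiting that the particle samples the field only on $B_{R_\rho}(q(s))$ and that the relevant local seminorms close up under finite propagation speed. Second, the computation of $\dddot q$ uses $\dot\varphi=\pi$ and differentiation under the integral sign, which for a mild solution in $C(\R,{\cal E})$ should be justified by approximation with smooth data (the same density argument you already invoke for (iii)). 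Neither point is a gap in substance, only in the write-up.
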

 Our first main result is the following theorem.
 \begin{theorem}\la{t1}
 Let conditions (\re{P})--(\re{Q}) and  (\re{C1})-(\re{WP8}) hold. Then for the corresponding  solution 
 $Y(t)\in {\cal E}$ to the Cauchy problem (\re{WP2.1})\\
 i) 
 The attraction holds
\be\la{WP9} 
Y(t)\tocEF {\cal S},\quad t\longrightarrow \pm\infty.
\ee
ii) If additionally, the set ${\cal S}$ is discrete, then (\re{WP9}) implies similar  convergence
\be\la{WP10}
Y(t)\tocEF S_\pm,\quad t\longrightarrow \pm\infty.
\ee
\end{theorem}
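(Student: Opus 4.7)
The plan is to follow the near/far field decomposition strategy of \ci{KSK,KS00,Sp04}, adapted to the nonrelativistic (potentially superluminal) setting. First I would split the field as $\varphi=\varphi_f+\varphi_r$, where $\varphi_f=W(t)(\varphi_0,\pi_0)$ solves the homogeneous wave equation and $\varphi_r$ is the retarded field generated by the source $-\rho(\,\cdot\,-q(\tau))$. Using the far-field asymptotics of the Li\'enard-Wiechert-type potentials developed in \S\ref{LW-as}, I would compute the outgoing energy flux through the sphere $|x|=R$ and pass to the limit $R\to\infty$, obtaining a nonnegative \emph{radiation functional} of the form
\be
\mathcal{R}(t)=\int_{S^2}d\omega\int\int k_\omega(t-\tau,t-\tau')\,\ddot q(\tau)\cdot\ddot q(\tau')\,d\tau\,d\tau',
\ee
whose kernel $k_\omega$ is built from the autocorrelation of $\rho$ along the direction $\omega$. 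By the a priori bound (\re{aps}) together with the lower bound on $\cH$ coming from (\re{WP1r})--(\re{WPHB}), the total radiated energy $\int_\R \mathcal R(t)\,dt$ is finite.

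Next, interpreting $\mathcal R$ as a convolution, I would apply the Wiener Tauberian theorem --- whose hypothesis is precisely the non-vanishing condition (\re{W}) on $\hat\rho$ --- to conclude $\ddot q(t)\to 0$ as $|t|\to\infty$. The principal obstacle is that for $|\dot q(t)|>1$ the classical Larmor factor $(1-\dot q\cdot\omega)^{-3}$ diverges on the Cherenkov cone $\{\omega\in S^2:\dot q\cdot\omega=1\}$, rendering the naive far-field formulas singular. This is precisely where the planarity condition (\re{Q}) enters: with $q^3(t)\equiv 0$ the singular set on $S^2$ is a one-dimensional symmetric curve that can be handled by an explicit azimuthal integration, yielding integrable singularities and a convolution representation of $\mathcal R$ compatible with Wiener's theorem. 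From $\ddot q\to 0$ I would then deduce $\dot q\to 0$ by a Barbalat-type argument: if $|\dot q(t_n)|\ge\ve$ along some sequence $t_n\to\infty$, then $\ddot q\to 0$ forces $\dot q$ to stay nearly constant on arbitrarily long intervals about $t_n$, contradicting the a priori bound $|q(t)|\le\ov q_0$ from (\re{WP2.4}). This establishes (\re{dq-lim}).

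For the field attraction in (i), the Kirchhoff representation combined with the decay (\re{WP8}) of the initial data shows that $\varphi_f(t)\to 0$ in $H^1(B_R)$ on every ball, while the retarded field $\varphi_r(t)$ is driven by the asymptotically stationary source $\rho(\,\cdot\,-q(t))$ (since $\dot q\to 0$), and standard local energy estimates give $\varphi_r(t)-s_{q(t)}\to 0$ in $H^1(B_R)$ and $\pi(t)\to 0$ in $L^2(B_R)$ along cluster points $q(t)\to q_\ast\in Z$. Together with $p(t),\dot q(t)\to 0$, this proves $Y(t)\tocEF \cS$, establishing (\re{WP9}). For part (ii), once $\cS$ is discrete the critical set $Z$ is discrete as well; since $q(t)$ is a bounded continuous curve all of whose cluster points lie in the discrete set $Z$, continuity forces $q(t)$ to converge to a single limit $q_\pm\in Z$ as $t\to\pm\infty$, yielding (\re{WP10}).
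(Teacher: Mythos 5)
Your overall architecture --- the retarded/Kirchhoff splitting of the field, finiteness of the total radiated energy from the lower bound on $\cH$, a Wiener Tauberian argument under (\re{W}) to get $\ddot q\to 0$, the deduction $\dot q\to0$ from boundedness of $q(t)$, local-energy convergence of the retarded field to $s_{q(t)}$, and connectedness of the limit set for part (ii) --- coincides with the paper's.

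The genuine gap is at the one point where the result is new, namely \emph{how} the planarity condition (\re{Q}) neutralizes the Cherenkov singularity. You propose to integrate azimuthally over the singular curve $\{\om\in S^2:\om\cdot\dot q=1\}$ and argue that the $(1-\om\cdot\dot q)^{-3}$ singularity is integrable. This does not work as stated: the convolution representation $\ov\pi(\om,\cdot)=\rho_a*g_\om$ that feeds into Wiener's theorem is obtained, \emph{for each fixed direction} $\om$, by the substitution $\theta=\tau-\om\cdot q(\tau)$, which is a diffeomorphism only when $\om\cdot\dot q(\tau)<1$ for all $\tau$. Near the Cherenkov directions this change of variables degenerates, the convolution structure is destroyed, and no integrability of the resulting expression in $\om$ restores the hypotheses of the Tauberian theorem. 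The paper instead \emph{avoids} the singular directions altogether: since $\dot q(t)$ and $\ddot q(t)$ lie in the plane $x^3=0$, for every $\om$ with $|\om^3|\ge\Theta$ (a neighborhood of the poles $(0,0,\pm1)$, with $\Theta$ as in (\re{theta})) one has $|\om\cdot\dot q|\le\ov q_1\sqrt{1-\Theta^2}<1$ uniformly in $t$, so the far-field formula (\re{WP3.0}) is regular there, the substitution is a global diffeomorphism, and Pitt's form of Wiener's theorem yields $\om\cdot\ddot q(t)\to0$ for all such $\om$. Because the projections of these $\om$ span the plane containing $\ddot q$, this already gives $\ddot q\to0$; no information from the singular directions is used. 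A secondary looseness: you assert that cluster points $q_*$ of $q(t)$ lie in $Z$ without argument. In the paper this is not a consequence of $\dot q\to0$ alone but of the invariance of the omega-limit set under the flow in ${\cal E}_F$: a curve of stationary states $S_{Q(t)}$ can solve (\re{WP2}) only if $\dot Q\equiv0$ and $\nabla V(Q)=0$. You should supply that step.
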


 Our second main result  refine the asymptotics 
(\re{WP9})-- (\re{WP10}) for initial fields from the Sobolev weighted spaces.

 \begin{theorem}\la{WPC}
Let conditions (\re{P})--(\re{W}) hold, and let $Y(t)\in C(\R,{\cal E})$ be a solution to the Cauchy 
 problem  (\re{WP2.1}) with  $Y_0\in{\cal E}_\al$, where $\al >1$. 
Suppose that
\be\la{WPconv1}
Y(t)\tocEF S_{q_+},\quad \quad t\to \infty
\ee
where  the limit point $q_+\in Z$ satisfies (\re{d2V}).
Then \\
i) For every $\ve>0$
\be\la{WPEE1}
 {\|Y(t)-S_{q_+}\|}_{-\al}={\cal O}(t^{-\al+\ve}),\quad t\to\infty.
\ee
ii) For every $\ve>0$ the scattering asymptotics  hold,
\be\la{S} 
(\varphi(x,t), \pi(x,t))= (s_{q_+}, 0) +W(t)\Phi_+ +r(x,t), 
\ee 
where $\Phi_+\in\Ho^1\oplus L^2$, and
\be\la{rm} 
\Vert r(t)\Vert_{\Ho^1\oplus L^2}={\cal O}(|t|^{-\al+1+\ve}),\quad t\to \infty.
\ee 
\end{theorem}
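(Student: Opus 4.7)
The plan is to perform the analysis around the stationary state $S_{q_+}$. Write $Y(t)=S_{q_+}+X(t)$ with $X(t)=(\Psi(t),\Pi(t),Q(t),P(t))$, so that $Q(t)=q(t)-q_+\to 0$ and $\Psi(t)=\varphi(t)-s_{q_+}\to 0$ in the local energy topology by the assumption \eqref{WPconv1}. Substituting into \eqref{WP2} and Taylor-expanding $\nabla V(q_++Q)$ and $\rho(x-q_+-Q)$ to first order, one obtains the linearized system $\dot X=AX+N(X)$, where $A$ is the generator of the linearization at $S_{q_+}$ (studied in \cite{KK16}) and $N(X)$ collects all quadratic and higher terms, schematically of the form $\Psi\cdot Q$, $Q^2$, and $Q\cdot\nabla\rho$-type contributions. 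Nondegeneracy $d^2V(q_+)>0$ together with the Wiener condition \eqref{W} is exactly what makes $A$ admit the weighted dispersion decay proved in \cite{KK16}, typically of the form $\|e^{At}P_c\|_{\mathcal{E}_\al\to\mathcal{E}_{-\al}}\lesssim \langle t\rangle^{-\al}$ on the continuous spectral subspace.

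Next I would write Duhamel's formula
\begin{equation*}
X(t)=e^{At}X(0)+\int_0^t e^{A(t-s)}N(X(s))\,ds,
\end{equation*}
project onto the continuous spectral subspace (the discrete spectral projection is absent here because the nondegenerate minimum produces no zero-energy resonances once symplectic orthogonality to the tangent space of $\cal S$ at $S_{q_+}$ is imposed; modulation of $q_+$ absorbs the would-be degeneracy), and run a bootstrap argument in the norm $\|X(t)\|_{-\al}$. The free term $e^{At}X(0)$ decays like $t^{-\al}$ by the dispersion estimate since $X(0)\in\mathcal{E}_\al$. For the nonlinear term, the key observation is that $N(X)$ has compact $x$-support (through the factor $\rho$ or $\nabla\rho$) and is at least quadratic in the small quantities $\Psi$ and $Q$, so each factor contributes a $t^{-\al+\ve}$ from the assumed decay; consequently $\|N(X(s))\|_\al$ is bounded by $\langle s\rangle^{-2\al+2\ve}$ which, thanks to $\al>1$, is integrable against the $(t-s)^{-\al}$ kernel and produces the stated rate $O(t^{-\al+\ve})$. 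This closes the bootstrap and yields \eqref{WPEE1}.

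For the scattering asymptotics \eqref{S}--\eqref{rm} I would use the representation
\begin{equation*}
\varphi(\cdot,t)-s_{q_+}=W(t)\bigl(\varphi_0-s_{q_+},\pi_0\bigr)+\int_0^t W(t-s)\bigl(0,-\rho(\cdot-q(s))+\rho(\cdot-q_+)\bigr)\,ds,
\end{equation*}
and show that the source $\rho(\cdot-q(s))-\rho(\cdot-q_+)=O(|Q(s)|)$ in $L^2$, combined with the version of the strong Huygens principle for the nonlinear system to be established in \S\ref{nH-sec}, allows one to rewrite the Duhamel integral as $W(t)\Phi_++r(x,t)$ where
\begin{equation*}
\Phi_+=\int_0^\infty W(-s)\bigl(0,-\rho(\cdot-q(s))+\rho(\cdot-q_+)\bigr)\,ds\in\Ho^1\oplus L^2
\end{equation*}
and $r$ is the tail of the integral. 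Because the integrand decays like $s^{-\al+\ve}$ by part (i), the tail is $O(t^{-\al+1+\ve})$ in the energy norm, giving \eqref{rm}.

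The main obstacle is the bootstrap step: establishing that the nonlinear feedback really is genuinely quadratic after an appropriate symplectic/modulation orthogonality decomposition, and that the dispersion decay from \cite{KK16} applies with the exponent $\al$ used here (possibly requiring that one absorb the generalized null modes of $A$ — coming from translation of $q_+$ and the corresponding Coulombic tail — via a time-dependent choice of center, so that the remainder lives on the continuous spectral subspace). Once that is done, the remaining estimates are standard convolution bounds of the type $\int_0^t (t-s)^{-\al}s^{-2\al+2\ve}\,ds=O(t^{-\al+\ve})$, valid for $\al>1$.
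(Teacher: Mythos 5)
Your overall architecture for part (i) --- linearize at $S_{q_+}$, write Duhamel with the linearized group, use the weighted decay of \ci{KK16}, and bootstrap the norm $\Vert X(t)\Vert_{-\al}$ --- matches the paper, and your treatment of part (ii) is essentially identical to the paper's (Duhamel with $W(t)$, unitarity on $\Ho^1\oplus L^2$, and the $s^{-\al+\ve}$ decay of the source supplied by part (i)). But there is a genuine gap in part (i): your bootstrap does not close as stated, because the free term $e^{At}X(0)$ is controlled by $\Vert Y_0-S_{q_+}\Vert_{\al}$, which is \emph{not} small --- the hypothesis (\re{WPconv1}) is only convergence in local energy seminorms as $t\to\infty$, and $Y_0$ may be arbitrarily far from $S_{q_+}$ in ${\cal E}_\al$. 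An inequality of the form $m\le C(A+m^2)$ for the majorant $m(t)=\sup_{s\le t}(1+s)^{\al-\ve}\Vert X(s)\Vert_{-\al}$ yields a global bound only when the free term $A$ is small. The paper's essential device --- which you mention only in passing and attach to the wrong half of the proof --- is the nonlinear Huygens principle of \S\re{nH-sec} (Lemma \re{WPl9.1}): one restarts the evolution at a large time $t_*$, replacing the solution by a modified one that agrees with $Y(t)$ in the forward cone $|x|<t-t_*$, and decomposes the state at time $t_*$ as $S_0+K_0+Z_0$, where $Z_0$ is compactly supported with $\Vert Z_0\Vert_{\al}\le\de$ (small because $q,\dot q$ have already relaxed and the retarded potential has essentially settled to $s_0$), while $K_0$ is the outgoing Kirchhoff wave, not small, but satisfying $\Vert U_0(\tau)K_0\Vert_{-\al}\leq C(1+t_*+\tau)^{-\al}$, so that its contribution to the majorant, $\sup_{\tau>0}(1+\tau)^{\al-\ve}(1+t_*+\tau)^{-\al}$, can be made arbitrarily small by taking $t_*$ large. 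Without this step the smallness needed to close the quadratic bootstrap is simply unavailable.

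A secondary point: the spectral projection onto a continuous subspace, the symplectic orthogonality, and the modulation of $q_+$ that you invoke are unnecessary here and slightly misread the structure. Because $d^2V(q_+)>0$, the linearized Hamiltonian (\re{WPH0}) is positive definite and there are no zero modes to project away; Proposition \re{TDL} (from \ci{KK16}) gives the decay $\Vert U_0(t)Z_0\Vert_{-\al}\le C(1+|t|)^{-\al}\Vert Z_0\Vert_{\al}$ for the \emph{full} linearized group, with the Wiener condition (\re{W}) supplying the absence of resonances. Part (ii), by contrast, needs no Huygens principle at all: it follows from part (i) and the unitarity of $W(t)$ on $\Ho^1\oplus L^2$ exactly as you wrote.
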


\setcounter{equation}{0}
\section{Li\'enard-Wiechert asymptotics}\label{LW-as}
The solution 
to the non-homogeneous wave equation from the system 
(\re{WP2})  is the sum of 
two terms. The first is 
the retarded Li\'enard-Wiechert potential (\re{WPretp})
which is the 
solution to the non-homogeneous wave equation with zero initial data.
The second term is 
the solution to the homogeneous equation 
with the initial data of the total field. This term is given by the Kirchhoff formula 
(\re{kf}).

The second term does not does not affect the long-time asymptotics of the solution
due to the strong Huygens principle. Thus, exactly 
the retarded Li\'enard-Wiechert potential is responsible for the long-time asymptotics. 

In this section 
we refine the  results \ci{KSK,KS00} on the long time and long range asymptotics 
of the  Li\'enard-Wiechert potentials
\be\la{WPretp}
  \varphi_r(x,t)=-\fr 1{4\pi}\int\fr{dy~\theta(t-|x-y|)}{|x-y|}\rho(y-q(t-|x-y|)),
\qquad \pi_r(x,t)=\dot\varphi_r(x,t).
\ee
These asymptotics will play the key role in subsequent calculation
of the energy radiation which is used
 in the proof of the relaxation (\re{dq-lim}).
Furthermore, we estimate the energy radiation corresponding  to
 the Kirchhoff integral (\re{kf}). 
\medskip

First, we prove asymptotics  of the retarded potentials  
in the wave zone $|x|\sim t\to\infty$. These asymptotics and their proofs
are similar to that of  Lemma 3.2 of \ci{KS00}.
\begin{lemma}\la{WPKR} Let conditions (\re{P}) and  (\re{C}) hold. Then
there exists  $T_r>0$ such that
the following asymptotics hold uniformly in $t\in[T_r,T]$
for every fixed $T>T_r$,
\beqn
\pi_r(x,|x|+t)&=&\,\,\,\,\ov \pi(\om(x),t)|x|^{-1}+{\cal O}(|x|^{-2}),\la{WP3.7}\\
\nabla\varphi_r(x,|x|+t)&=&-\om(x)\ov \pi(\om(x),t)|x|^{-1}+{\cal O}(|x|^{-2})\la{WP3.7'}
\eeqn 
as $|x|\to\infty$ with a function $\ov \pi(\om,t)$. Here $\om(x)=x/|x|$.
\end{lemma}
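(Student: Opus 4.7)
The plan is to substitute $t\mapsto|x|+t$ in the integral representation (\ref{WPretp}) and perform an asymptotic expansion as $|x|\to\infty$, using $|x-y|=|x|-\omega(x)\cdot y+{\cal O}(|x|^{-1})$ and $|x-y|^{-1}=|x|^{-1}+{\cal O}(|x|^{-2})$ uniformly on the effective integration domain. That domain is bounded: $\rho$ has compact support by (\ref{C}) and $|q(t)|\le\ov q_0$ uniformly by Lemma \ref{WPexistence}(v), so the integrand vanishes outside $|y|\le\ov q_0+R_\rho$. Choosing $T_r:=\ov q_0+R_\rho+1$ and $|x|$ large, the shifted retarded time $|x|+t-|x-y|\approx t+\omega(x)\cdot y$ is strictly positive throughout the effective integration domain for every $t\in[T_r,T]$, so the Heaviside factor equals $1$ and its derivative, a $\delta$-distribution concentrated on $|x-y|=|x|+t$, contributes nothing for $|x|$ large enough.

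\medskip

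\noindent First I will treat $\pi_r=\dot\varphi_r$. Differentiating (\ref{WPretp}) under the integral sign --- legitimate since $\rho\in C_0^\infty$ by (\ref{C}) and $q\in C^3$ by Lemma \ref{WPexistence}(v) --- gives, modulo the $\delta$-term above,
\[
\pi_r(x,t)=\fr{1}{4\pi}\int\fr{1}{|x-y|}\,\nabla\rho(y-q(t-|x-y|))\cdot\dot q(t-|x-y|)\,dy.
\]
After substituting $t\mapsto|x|+t$ and performing a first-order Taylor expansion of $\rho$, $\nabla\rho$, and $\dot q$ about the shifted argument $t+\omega(x)\cdot y$, with remainders controlled by the uniform bounds $|q^{(k)}|\le\ov q_k$, one reads off the leading contribution
\[
\ov\pi(\omega,t):=\fr{1}{4\pi}\int\nabla\rho\bigl(y-q(t+\omega\cdot y)\bigr)\cdot\dot q(t+\omega\cdot y)\,dy,
\]
of order $|x|^{-1}$, with an ${\cal O}(|x|^{-2})$ remainder uniform in $t\in[T_r,T]$.

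\medskip

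\noindent Next I will compute $\nabla_x\varphi_r$ by differentiating (\ref{WPretp}) in $x$. Two contributions appear: the derivative of $|x-y|^{-1}$ produces a term of size ${\cal O}(|x|^{-2})$, absorbed in the remainder, while the derivative of the retarded argument produces the factor $-\omega(x-y)$, which then contracts with $\dot q$ against $\nabla\rho$. Using $\omega(x-y)=\omega(x)+{\cal O}(|x|^{-1})$ and comparing with the integral just displayed for $\ov\pi$, the leading part is exactly
\[
\nabla\varphi_r(x,|x|+t)=-\omega(x)\,\ov\pi(\omega(x),t)\,|x|^{-1}+{\cal O}(|x|^{-2}),
\]
uniformly in $t\in[T_r,T]$. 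This is the expected far-field identity $\nabla\varphi_r\sim-\omega\,\pi_r$ expressing that the Li\'enard--Wiechert field is asymptotically a spherical outgoing wave.

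\medskip

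\noindent The main technical point is uniformity of the ${\cal O}(|x|^{-2})$ remainders across the slab $t\in[T_r,T]$. This reduces to bounding second-order derivatives of the integrand --- in particular $\ddot q$ evaluated at retarded times that shift by ${\cal O}(|x|^{-1})$ in both $x$ and $t$ --- which is exactly what the uniform bounds on $q^{(k)}$, $k=0,1,2,3$, from Lemma \ref{WPexistence}(v) supply. I want to emphasize that no Cherenkov-type obstacle intervenes at this stage: we evaluate $q$ only at the explicitly prescribed argument $t-|x-y|$, never via the inverse of an implicit retarded-time equation $\tau-\omega\cdot q(\tau)=t+\omega\cdot z$, which is where multivaluedness in the regime $|\dot q|>1$ would arise. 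That difficulty is deferred to \S\ref{scat-sec}, when $\ov\pi(\omega,t)$ is inserted into the energy radiation integral.
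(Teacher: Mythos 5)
Your proof is correct and follows essentially the same route as the paper: restrict to the bounded effective domain $|y|\le\ov q_0+R_\rho$ so that the Heaviside factor is identically $1$ for $t\ge T_r$, differentiate under the integral, expand $|x-y|^{-1}=|x|^{-1}+{\cal O}(|x|^{-2})$ and $\tau=t+\om(x)\cdot y+{\cal O}(|x|^{-1})$ using the uniform bounds on $q^{(k)}$, and read off $\ov\pi$ together with the far-field relation $\nabla\varphi_r\sim-\om\,\pi_r$. The only discrepancy is an overall sign in the resulting formula for $\ov\pi$ (your $+\fr{1}{4\pi}\int\nabla\rho\cdot\dot q\,dy$ versus the paper's $-\fr{1}{4\pi}\int\nabla\rho\cdot\dot q\,dy$ in (\ref{WP3.17})), which is immaterial since the lemma only asserts the existence of some function $\ov\pi(\om,t)$ and both conventions are internally consistent.
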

\begin {proof}
The integrand of (\re{WPretp}) vanishes for $|y|>T_r:=\ov q_0+R_\rho$. Then for $t-|x|>T_r$ one has
$$
|x-y|\le |x|+|y|\le t-T_r+T_r\le t,
$$
and hence (\re{WPretp}) implies that
\be\la{WP3.12}
\pi _r(x,t)=-\int dy~\fr 1 {4\pi |x-y|}\nabla\rho(y-q(\tau))\cdot\dot q(\tau),
\ee
where $\tau=t-|x-y|$. Similarly, for  $t-|x|>T_r$ 
\beqn\la{WP3.15}
\nabla\varphi_r(x,t)&=&\int dy~\fr 1 {4\pi |x-y|} n\nabla\rho(y-q(\tau))\cdot\dot q(\tau)
+{\cal O}(|x|^{-2})\nonumber\\
&=&-\om(x) \pi_r(x,t)+{\cal O}(|x|^{-2}),
\eeqn
since $\ds n=\fr{x-y}{|x-y|}=\om(x)+{\cal O}(|x|^{-1})$ for bounded $y$.
Now we substitute $|x|+t$ instead of $t$ in  representations (\re{WP3.12}), (\re{WP3.15})
to get  asymptotics (\re{WP3.7}), (\re{WP3.7'}) for $t>T_r$. Then $\tau$ becomes 
\be\la{WP3.16}
\tau=|x|+t-|x-y|=t+\omega(x) \cdot y+{\cal O}(|x|^{-1})=\ov\tau+{\cal O}(|x|^{-1}),\qquad \ov\tau=t+\om\cdot y,
\ee
since
$$
|x|-|x-y|=|x|-\sqrt{|x|^2-2x\cdot y+|y|^2}\sim |x|\Big(\fr {x\cdot y}{|x|^2}+\fr{|y|^2}{2|x|^2}\Big)
=\omega(x) \cdot y+{\cal O}(|x|^{-1}).
$$
 Hence (\re{WP3.12})  implies (\re{WP3.7}) with
\be\la{WP3.17}
\ov \pi(\om,t)=-\fr 1 {4\pi}\int dy~\nabla\rho(y-q(\ov\tau))\cdot\dot q(\ov\tau).
\ee
Then (\re{WP3.15}) gives (\re{WP3.7'}) immediately.
\end{proof}
Note that asymptotics (\re{WP3.7}) - (\re{WP3.7'}) hold  without condition (\re{Q}). 
However, this condition  allows us to represent (\re{WP3.17}) in a more efficient way for
$\om$ close to $(0,0,\pm 1)$, see next lemma. Namely,  let us denote
\beqn\la{theta}
\Theta=\left\{\ba{rl}
0,&\ov q_1<1\\
\ve+\sqrt{1-(\ov q_1)^{-2}},&\ov q_1\ge 1\\
\ea\right.
\eeqn
with an arbitrary small $0<\ve<1-\sqrt{1-(\ov q_1)^{-2}}$.
Then for $\om=(\om^1,\om^2, \om^3)$ with $|\om^3|\ge\Theta$  we obtain
\be\la{le1}
|\om\cdot\dot q|=|\dot q| |\cos(\om,\dot q)|\le\ov q_1\sqrt{1-(\om^3)^2}\le\ov q_1\sqrt{1-\Theta^2}<1.
\ee 
\begin{lemma}\la{New}
Let  conditions  (\re{P}), (\re{C}) and (\re{Q}) hold. Then for any $\om$ with $|\om^3|\ge\Theta$  one has
\be\la{WP3.0}
\ov \pi(\om,t)=\fr 1 {4\pi}\int dy~\rho(y-q(\ov\tau))\frac{\om\cdot\ddot q(\tau)}{(1-\om\cdot\dot q(\tau))^2}
\ee
\end{lemma}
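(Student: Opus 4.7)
The plan is to derive (\re{WP3.0}) from (\re{WP3.17}) by an integration by parts in $y$. The integrand of (\re{WP3.17}) is $-(4\pi)^{-1}\nabla\rho(y-q(\ov\tau))\cdot\dot q(\ov\tau)$ with $\ov\tau=t+\om\cdot y$, and it has compact support in $y$ by Lemma \ref{WPexistence}(v) combined with (\re{C}), so all boundary terms will vanish. The key is to rewrite $\dot q(\ov\tau)\cdot(\nabla\rho)(y-q(\ov\tau))$ as a total $y$-derivative divided by $1-\om\cdot\dot q(\ov\tau)$, then integrate by parts and convert a second $\dot q$ factor into $\ddot q$ via $\partial_{y^i}\ov\tau=\om^i$.

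For the algebraic identity, let $G(y):=\rho(y-q(\ov\tau(y)))$. The chain rule gives $\nabla_y G(y)=\nabla\rho(y-q(\ov\tau))-\om\,[\dot q(\ov\tau)\cdot\nabla\rho(y-q(\ov\tau))]$, and dotting with $\dot q(\ov\tau)$ yields
\[
\nabla_y G(y)\cdot\dot q(\ov\tau)=(1-\om\cdot\dot q(\ov\tau))\,\dot q(\ov\tau)\cdot\nabla\rho(y-q(\ov\tau)).
\]
This is precisely where the hypothesis $|\om^3|\ge\Theta$ becomes essential: together with (\re{Q}) and (\re{le1}) it ensures $1-\om\cdot\dot q(\ov\tau)\ge 1-\ov q_1\sqrt{1-\Theta^2}>0$ uniformly in $\ov\tau$, so division by $1-\om\cdot\dot q$ is legitimate.

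Once this is in hand, integration by parts in $y$ gives
\[
\ov\pi(\om,t)=\frac{1}{4\pi}\int G(y)\,\nabla_y\!\cdot\!\left[\frac{\dot q(\ov\tau(y))}{1-\om\cdot\dot q(\ov\tau(y))}\right]dy.
\]
Since $\partial_{y^i}\ov\tau=\om^i$, the divergence reduces to $\om\cdot\frac{d}{d\ov\tau}[\dot q/(1-\om\cdot\dot q)]$, and a short algebraic simplification collapses it to $\om\cdot\ddot q(\ov\tau)/(1-\om\cdot\dot q(\ov\tau))^2$, which yields (\re{WP3.0}).

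The main obstacle, and the reason why the cutoff $\Theta$ and the planar condition (\re{Q}) are imposed, is securing $\inf_{\ov\tau}(1-\om\cdot\dot q(\ov\tau))>0$ in the superluminal regime $\ov q_1\ge 1$. Without (\re{Q}) the velocity $\dot q(\ov\tau)$ could point in any direction of $S^2$ and the denominator could vanish for arbitrary $\om$, a reflection of the Cherenkov singularity already visible in the classical Li\'enard--Wiechert formula; the planar constraint confines $\dot q$ to the $(\om^1,\om^2)$-plane, so the singular set is contained in $|\om^3|<\Theta$, and outside this polar cap the manipulations above are routine calculus on a compactly supported integrand.
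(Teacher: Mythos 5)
Your proof is correct and follows essentially the same route as the paper: the same chain-rule identity $\nabla_y\rho(y-q(\ov\tau))\cdot\dot q(\ov\tau)=(1-\om\cdot\dot q(\ov\tau))\,\nabla\rho(y-q(\ov\tau))\cdot\dot q(\ov\tau)$, division by $1-\om\cdot\dot q$ justified by (\re{le1}), integration by parts on the compactly supported integrand, and the reduction of the divergence to $\om\cdot\ddot q/(1-\om\cdot\dot q)^2$ via $\pa_{y^j}\ov\tau=\om^j$. The only cosmetic difference is that the paper writes the divergence as a sum over $j=1,2$ (using $\dot q^3=0$ from (\re{Q})) while you write the full three-dimensional divergence, which coincides with it for the same reason.
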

\begin{proof}
We observe that
$$
\nabla_y\rho(y-q(\ov\tau))\cdot\dot q(\ov\tau)=
 \nabla\rho(y-q(\ov\tau))\cdot\dot q(\ov\tau)~(1-\om\cdot\dot q(\ov\tau)).
$$
Then  (\re{le1}) implies
\beqn\la{WP3.18}
\int dy~\nabla\rho(y-q(\ov\tau))\cdot\dot q(\ov\tau)&=&\,\,\,\,\int dy~
\nabla_y\rho(y-q(\ov\tau))\cdot\dot q(\ov\tau)\fr 1{1-\om\cdot\dot q(\ov\tau)}\nonumber\\
&=&-\int dy~\rho(y-q(\ov\tau))\sum\limits_{j=1}^2\fr{\pa}{\pa y^j}\fr{\dot q^j(\ov\tau)}{1-\om\cdot\dot q(\ov\tau)}.
\eeqn
Differentiating, we get
\be\la{WP3.19}
\sum\limits_{j=1}^2\fr{\pa}{\pa y^j}\fr{\dot q^j}{1-\om\cdot\dot q}=\fr{\om\cdot\ddot q}{(1-\om\cdot\dot q)^2}.
\ee
Then (\re{WP3.17}) agrees evidently with (\re{WP3.0}).
\end{proof}
Denote $(\varphi_K(t),\pi_K(t)):=W(t)[(\varphi_0,\pi_0)]$, where 
 $\varphi_K(x,t)$ is  the Kirchhoff integral
\be\la{kf}
 \varphi_K(x,t)=\fr 1{4\pi t}\int_{S_t(x)}~d^2y~\pi_0(y)+
 \fr \pa {\pa t}~\Bigg(~\fr 1{4\pi t}\int_{S_t(x)}~d^2y~\varphi_0(y)~\Bigg),
\ee
and $\pi_K(x,t)=\dot\varphi_K(x,t)$. Here $S_t(x)$ denotes the sphere $\{y:~|y-x|=t\}$ and $d^2y$ is the corresponding 
surface area element. Below we will use the following lemma:
\begin{lemma}\la{WPI00}
Let $(\varphi_0,\pi_0)$ satisfies (\re {C1}) and (\re {WP8}). Then
there exist $I_0<\infty$  such that for every $R>0$ and every $T>T_0\ge0$
\be\la{WP3.10}
 \int_{R+T_0}^{R+T}dt \int_{S_R}d^2x~\Big(|\pi_K(x,t)|^2+|\nabla\varphi_K(x,t)|^2\Big) \leq I_0.
\ee
Here and below $S_R=S_R(0)$.
\end{lemma}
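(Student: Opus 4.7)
I argue in three stages: localisation by the strong Huygens principle, a uniform energy bound for the truncated initial data, and a universal cylinder trace estimate for free Kirchhoff solutions.

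\emph{Localisation.} By Kirchhoff's formula \eqref{kf} the values $\varphi_K(x,t),\pi_K(x,t),\nabla\varphi_K(x,t)$ depend only on $\varphi_0,\nabla\varphi_0,\pi_0$ restricted to the sphere $S_t(x)=\{y:|y-x|=t\}$. For $|x|=R$ and $t\ge R+T_0$ this sphere lies in $\{|y|\ge t-R\ge T_0\}$. I therefore fix a smooth radial cutoff $\chi$ with $\chi(r)=1$ for $r\ge T_0/2$ and $\chi(r)=0$ for $r\le T_0/4$; the Kirchhoff solution produced by the truncated data $(\tilde\varphi_0,\tilde\pi_0):=(\chi\varphi_0,\chi\pi_0)$ then coincides with $(\varphi_K,\pi_K,\nabla\varphi_K)$ on the cylinder $\{|x|=R,\,R+T_0\le t\le R+T\}$.

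\emph{Uniform energy bound.} From \eqref{WP8} one has $|\nabla\varphi_0|,|\pi_0|\le C(1+|y|)^{-\sigma}$ with $\sigma>3/2$. Since $\Ho^1\hookrightarrow L^6$ forces $\varphi_0(y)\to 0$ at infinity, integrating the gradient bound radially yields $|\varphi_0(y)|\le C(1+|y|)^{1-\sigma}$. A direct computation then gives
$$
\tilde E:=\tfrac12\int(|\tilde\pi_0|^2+|\nabla\tilde\varphi_0|^2)\,dy\le C_0,\qquad T_0\ge 0,
$$
since $2\sigma>3$ makes the global $L^2$-norms of $\pi_0$ and $\nabla\varphi_0$ finite, while the cutoff correction $\int|\chi'|^2|\varphi_0|^2\,dy$, supported in $\{T_0/4\le|y|\le T_0/2\}$, is of order $T_0^{3-2\sigma}\le C$ thanks to $3-2\sigma<0$.

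\emph{Cylinder estimate.} The proof is completed by the free-wave trace bound
\begin{equation}
\int_\R dt\int_{S_R}(|\tilde\pi_K|^2+|\nabla\tilde\varphi_K|^2)\,dS\le C\,\tilde E,\qquad R>0,
\end{equation}
with $C$ absolute. My strategy is to decompose $(\tilde\varphi_0,\tilde\pi_0)$ into spherical shells $\{A\le|y|<A+1\}$ with energies $E_A\le CA^{2-2\sigma}$, so $\sum_A E_A<\infty$ by $\sigma>3/2$. The strong Huygens principle in $\R^3$ localises the wave from the $A$-th shell at time $t$ inside $\{||x|-t|\le A+1\}$, so it meets $S_R$ only on a time interval of length at most $2(A+1)$; the corresponding cylinder contribution is $O(E_A)$ by the $1/t$ pointwise dispersion estimate for 3D free waves generated by data of bounded diameter, and summation in $A$ completes the argument.

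The hard part is the uniformity in $R$ of the constant in this cylinder estimate. For large $R$ it is essentially the Friedlander radiation field isometry $\|\partial_s F\|_{L^2(\R\times S^2)}^2=2\tilde E$ where $F(s,\omega)=\lim_{r\to\infty}r\tilde\varphi_K(r\omega,r+s)$; the shell decomposition is needed precisely to handle small $R$, where one must carefully combine the thin-shell Huygens support of each contribution with the $1/t$ decay integrated over the sphere $S_R$.
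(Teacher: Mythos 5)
The paper's own proof is a short direct computation: it rewrites the Kirchhoff formula (\ref{kf}) as integrals over the unit sphere, uses the pointwise decay (\ref{WP8}) (including the second derivatives) to obtain $|\pi_K(x,t)|+|\nabla\varphi_K(x,t)|\le C\sum_{s=0}^1 t^{s}\int_{S_1}|x+tz|^{-\sigma-s}\,d^2z$, and evaluates the spherical integral exactly as $\frac{2\pi}{(\sigma+s-2)|x|t}\bigl((t-|x|)^{2-\sigma-s}-(t+|x|)^{2-\sigma-s}\bigr)$, after which the cylinder integral converges because $2-2\sigma<-1$. Your route is different, and it has a genuine gap at its center: the displayed ``free-wave trace bound'' $\int_{\R}dt\int_{S_R}(|\tilde\pi_K|^2+|\nabla\tilde\varphi_K|^2)\,dS\le C\tilde E$ with an absolute constant is false for general finite-energy data. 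A high-frequency wave packet of width $\delta$ grazing $S_R$ tangentially overlaps the sphere for a time of order $\sqrt{R\delta}$, which produces a loss of a positive power of $\omega R$; equivalently, in the Funk--Hecke expansion the relevant eigenvalues $4\pi j_\ell(\omega R)^2$ are only $O((\omega R)^{-5/3})$ near the turning point $\ell\approx\omega R$, not the $O((\omega R)^{-2})$ one would need. The Friedlander radiation-field isometry you invoke controls only the limit $r\to\infty$ along light cones and says nothing about the trace on a sphere of finite radius, which is exactly where the glancing loss occurs.

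Your fallback, the shell decomposition, is the right idea, but as sketched it does not close, and when made precise it reduces to the paper's computation. The contribution of the shell $\{A\le|y|<A+1\}$ to the cylinder is not controlled by its energy $E_A$ together with a ``$1/t$ dispersion estimate for data of bounded diameter'': the shell has diameter about $2A$, so the $L^1\to L^\infty$ dispersive bound only gives $|\nabla\varphi_K^{(A)}|\lesssim t^{-1}A^{-\sigma-1}\cdot A^{2}$, whose square integrated over $S_R\times[R,R+A+1]$ is of order $A^{3-2\sigma}$ and is not summable for $3/2<\sigma\le 2$. What actually makes each shell contribute an admissible amount is the smallness of the band $S_t(x)\cap\{|y|\approx A\}$ on the Kirchhoff sphere, whose relative measure on $S_1$ is $O\bigl(A/(|x|t)\bigr)$; quantifying that band measure uniformly in $R$, $t$ and $A$ is precisely the explicit integral $\int_{S_1}|x+tz|^{-\sigma-s}\,d^2z$ that the paper evaluates in closed form. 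I recommend dropping the universal trace bound and the radiation-field detour and carrying out that one integral directly; the localisation and truncation steps then become unnecessary.
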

\begin{proof}
Formula (\re{kf}) implies 
$$
\varphi_K(x,t)=\fr{t}{4\pi}\int_{S_1}d^2z~\pi_0(x+tz)+\fr{1}{4\pi}\int_{S_1}d^2z~\varphi_0(x+tz)
+\fr{t}{4\pi}\int_{S_1}d^2z~\na\varphi_0(x+tz)\cdot z.
$$
Therefore
$$
\nabla\varphi_K(x,t)=\fr{t}{4\pi}\int_{S_1}\!\! d^2z~\na\pi_0(x+tz)+
\fr{1}{4\pi}\int_{S_1}\!\! d^2z~\na\varphi_0(x+tz)+
\fr{t}{4\pi}\int_{S_1}\!\! d^2z~\na_x(\na\varphi_0(x+tz)\cdot z).
$$
Here all derivatives are understood in the classical sense.
A similar representation holds for $\pi_K(x,t)$.
Hence, taking into account the assumption (\re{WP8}), we obtain 
\be\la{WPbdt}
|\pi_K(x,t)|,~|\nabla\varphi_K(x,t)|\le C \sum_{s=0}^1 t^{s}\int_{S_1}d^2z~|x+tz|^{-\sigma-s},\quad\si>3/2.
\ee
Further, for $\sigma\not =2$ we have
$$
\int_{S_1}d^2z~|x+tz|^{-\sigma-s}=\fr{2\pi}{(\si+s-2)|x|t}\Big((t-|x|)^{2-\si-s}-(t+|x|)^{2-\si-s}\Big),
\quad s=0,1.
$$
Therefore, 
\beqn\nonumber
 \int\limits_{ R+T_0}^{R+T}dt \int\limits_{S_R}d^2x\Big(|\pi_K(x,t)|^2+|\nabla\varphi_K(x,t)|^2\Big) 
\!\!\!&\le&\!\!\! C 
 \int\limits_{R+T_0}^{R+T} \Big[\fr{(t+R)^{4-2\si}+(t-R)^{4-2\si}}{t^2}+(t-R)^{2-2\sigma}\Big]dt\\
\nonumber
\!\!\! &\le& \!\!\! C_1\int\limits_{R+T_0}^{R+T}dt \Big[\Big(1+\fr Rt\Big)^{2}+\Big(1-\fr Rt\Big)^{2}+1\Big](t-R)^{2-2\sigma}
<\infty.
 \eeqn
\end{proof}
 \setcounter{equation}{0}
\section{Scattering of energy to infinity}\label{scat-sec}
In this section we establish a lower bound on the total energy radiated to infinity in terms of a "radiation integral".
Since the energy is bounded a priori, this integral has to be finite, which is then our main input
for proving Theorem \re{t1}.
\begin{pro}\la{WPaussen}
Let conditions (\re{P}), (\re{C}), (\re {C1}), (\re {WP8}) hold, and  let
$Y(t)=(\varphi(t),\pi(t), q(t), p(t))\in C(\R, {\cal E})$ be the solution to (\re{WP1}) with initial data $Y(0)=(\varphi_0,\pi_0, q_0, p_0)$.
 Then
\be\la{WP3.1}
\int_{0}^\infty dt\int_{S_1} d^2\om|\ov \pi(\om,t)|^2<\infty.
\ee
\end{pro}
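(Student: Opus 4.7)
The strategy is to combine energy conservation with the wave-equation energy flux identity, and then to extract the radiation integral via the far-field asymptotics of Lemma \ref{WPKR}. The first step is to observe that the exterior field energy
\[
E_{ext}(R,t) := \fr 12\int_{|x|>R}(|\pi(x,t)|^2 + |\nabla\varphi(x,t)|^2)\,dx
\]
is bounded uniformly in $t$ for every sufficiently large $R$. This follows from the rewritten Hamiltonian (\ref{WP1r})--(\ref{WPHB}) together with energy conservation (\ref{WPEC}), the confinement (\ref{P}) combined with (\ref{WP2.4}), and the pointwise bound $|\nabla\De^{-1}\rho(\cdot-q(t))|\le C|x|^{-2}$ valid for $|x|$ large, uniformly in $t$. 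Since $\rho(x-q(t))=0$ on $|x|>R$ for such $R$, the local energy identity for the free wave equation gives
\[
\fr{d}{dt}E_{ext}(R,t)=-\int_{S_R}\pi\,(\nabla\varphi\cdot\om)\,d^2x,
\]
so that for any $T\ge T_r$ the integral $-\int_{T_r}^T dt\int_{S_R}\pi(\nabla\varphi\cdot\om)\,d^2x = E_{ext}(R,T)-E_{ext}(R,T_r)$ is bounded by a constant $C$ independent of $T$ and $R$.

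Next I would decompose $\varphi=\varphi_r+\varphi_K$, $\pi=\pi_r+\pi_K$ and analyze the four bilinear pieces of the flux separately. For the pure Li\'enard-Wiechert contribution the asymptotics (\ref{WP3.7})--(\ref{WP3.7'}) yield
\[
-\int_{S_R}\pi_r(\nabla\varphi_r\cdot\om)(x,R+t)\,d^2x = \int_{S_1}|\ov\pi(\om,t)|^2\,d^2\om + {\cal O}(R^{-1})
\]
uniformly in $t\in[T_r,T]$. The Kirchhoff self-term $\int\int\pi_K(\nabla\varphi_K\cdot\om)$ is bounded by $\fr 12 I_0$ via AM--GM together with Lemma \ref{WPI00}, and each of the two mixed terms is bounded by Cauchy--Schwarz as $\sqrt{I_0\cdot I_T(R)}$, where
\[
I_T(R):=\int_{T_r}^T dt\int_{S_R}|\pi_r(x,R+t)|^2\,d^2x=I_T+{\cal O}((T-T_r)/R),\qquad I_T:=\int_{T_r}^T dt\int_{S_1}|\ov\pi(\om,t)|^2\,d^2\om.
\]

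Collecting these estimates and letting $R\to\infty$ with $T$ fixed produces the self-consistent inequality
\[
I_T \le C + \fr 12 I_0 + 2\sqrt{I_0\cdot I_T},
\]
which, viewed as a quadratic inequality in $\sqrt{I_T}$, forces $I_T$ to be bounded uniformly in $T$. Letting $T\to\infty$ and noting that $\int_0^{T_r}\int_{S_1}|\ov\pi|^2\,d^2\om\,dt$ is trivially finite because $\ov\pi$ is continuous and uniformly bounded on compact time intervals by (\ref{WP3.17}) and (\ref{WP2.4}) then yields (\ref{WP3.1}).

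The main technical obstacle is the correct ordering of the two limits. The asymptotics of Lemma \ref{WPKR} are only uniform on finite time windows $[T_r,T]$, so one must take $R\to\infty$ first for fixed $T$ and only afterwards send $T\to\infty$. It is equally crucial that the cross-term bound scales like $\sqrt{I_T}$ rather than $I_T$, for otherwise the bootstrap producing a $T$-uniform bound on the radiation integral would not close.
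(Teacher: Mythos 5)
Your proof is correct and follows essentially the same route as the paper: the paper bounds the time-integrated flux through $S_R$ via the interior energy ${\cal H}_R(t)$ of (\ref{WP3.2}) (rather than your exterior field energy, which is the same identity read from the other side), then uses the decomposition (\ref{WPRdec}), Lemmas \ref{WPKR} and \ref{WPI00}, Cauchy--Schwarz for the cross terms, and the limits $R\to\infty$ followed by $T\to\infty$, exactly as you do. Your explicit quadratic inequality in $\sqrt{I_T}$ is just the detail hidden behind the paper's remark that the bound ``follows by the Cauchy--Schwarz inequality.''
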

 \begin{proof}
{\it Step i)}. The energy ${\cal H}_R(t)$ at time $t\in\R$ in the ball $B_R$ with 
a radius $R>\ov q_0+R_\rho$ is defined by
\be\la{WP3.2}
{\cal H}_R(t)=\fr 12\int_{B_R}dx ~\Big(|\pi(x,t)|^2+|\nabla\varphi(x,t)|^2\Big)
+ \fr 12 p^2(t)+V(q(t))+\int_{\R^3} dx~ \varphi(x,t)\rho(x-q(t))\,.
\ee
Let us fix a $R>0$ and consider a total radiated energy ${\cal H}_R(R+T_0)-{\cal H}_R(R+T)$
from the ball $B_R$ during the time interval $[R+T_0, R+T]$, where $T>T_0\ge 0$.
This quantity is bounded a priori, because ${\cal H}_R(R+T_0)$ and 
 ${\cal H}_R(R+T)$
are bounded by (\re{aps}). 
Hence,
\be\la{WP3.4}
{\cal H}_R(R+T_0)-{\cal H}_R(R+T)\le I<\infty,
\ee
where  $I$ does not depend on $T_0$, $T$  and $R$.
\medskip\\
{\it Step ii)}.
Note that  the function $\varphi(x,t)=\varphi_r(x,t)+\varphi_K(x,t)$ is $C^1$ 
differentiable
in the region $t>|x|$ by (\re{WP8}), (\re{WPretp}) and (\re{kf}). Hence, 
differentiating  
(\re{WP3.2}) in $t$ and integrating by parts,
we get
\be\la{WP3.5}
\fr d{dt}{\cal H}_R(t)=\int_{S_R}d^2 x ~\omega (x)\cdot \pi(x,t)\nabla\varphi(x,t),
\quad t>R.
\ee
Now (\re{WP3.5}) and (\re{WP3.4}) imply
\be\la{WP3.6}
-\int_{R+T_0}^{R+T}~dt~\int_{S_R}d^2 x ~\omega (x)\cdot \pi(x,t)\nabla\varphi(x,t)\le I.
\ee
{\it Step iii)}. 
Let us show that \ref{WP3.6}  leads to (\re{WP3.1}) in the limits $R\to\infty$
 and then $T\to\infty$.
Indeed, 
substituting
\be\la{WPRdec} 
\pi=\pi_r+\pi_K,\quad \varphi=\varphi_r+\varphi_K 
\ee
into (\re{WP3.6}), we obtain
\be\la{WP3.9}
-\int_{R+T_0}^{R+T}~dt~\int_{S_R}d^2 x ~\omega (x)\cdot
(\pi_r \nabla\varphi_r+\pi_K \nabla\varphi_r+\pi_r \nabla\varphi_K
+\pi_K \nabla\varphi_K)\le I.
\ee
Then Lemmas \re{WPKR} and \re{WPI00}  imply for every fixed $T>T_0:=T_r$,
\be\la{WP3.10'}
\int_{T_r}^T~dt~\int_{S_1}d^2 \om ~|\ov \pi(\om,t)|^2\le I_1+T{\cal O}(R^{-1}),
\ee
where $I_1<\infty$ does not depend on $T$ and $R$. This follows by the Cauchy-Schwarz
 inequality.
Taking the limit $R\to\infty$ and then $T\to\infty$ we obtain (\re{WP3.1}).
 \end{proof}
 \setcounter{equation}{0}
 \section{Relaxation of the particle acceleration and velocity}\label{relax-sec}
In this section we  deduce  the relaxation  $\dot q(t)\to 0$, $\ddot q(t)\to 0$ as $t\to\infty$ using 
Proposition \re{WPaussen}. First, the function
 \be\la{WP4.2}
 \ov\pi(\omega,t)= \fr 1{4\pi}\int dy\,
\rho(y-q(t+\omega\cdot y))\frac{\omega\cdot\ddot q(t+\omega\cdot y)}
 {{(1-\omega\cdot\dot q(t+\omega\cdot y))}^2}
 \ee
is globally Lipschitz-continuous in $\omega$ and $t$ for $|\om^3|\ge\Theta$ due to  
(\re{le1}) and
the bounds  (\re{WP2.4})
with $k=2,3$.  Hence, Proposition \re{WPaussen} implies that
 \be\la{WP4.3}
 \lim_{t\to\infty} \ov\pi(\omega,t)=0
 \ee
uniformly in $\omega\in \Om(\Theta):=\{
\om\in S_1\!\!:|\om^3|\ge\Theta\}$.
Denote $r(t)=\omega\cdot q(t)\in\R$, $s=\omega\cdot y$, and
 $\rho_a(q^3)=\ds\int dq^1dq^2\rho(q^1,q^2,q^3)$, 
 and decompose  in (\re{WP4.2})  the $y$-integration along and transversal to $\omega$. Then
\begin{eqnarray}\la{WP4.4}
\ov\pi(\omega,t) & = & \int ds \,\rho_a(s-r(t+s))\,\frac{\ddot r(t+s)}
 {{(1-\dot r(t+s))}^2}\nonumber\\ & = &\int d\tau\, \rho_a(t-(\tau-r(\tau)))\,
 \frac{\ddot r(\tau)}{{(1-\dot r(\tau))}^2}= 
\int d\theta\, \rho_a(t-\theta) g_\omega(\theta)=\rho_a*g_\omega(t).
\end{eqnarray}
Here we substituted $\theta=\theta(\tau)=\tau-r(\tau)$, which is a nondegenerate diffeomorphism
since $\dot r\leq \ov r<1$ due to (\re{le1}), and we set
\be\la{Larmor}
 g_{\omega}(\theta)=\fr {\ddot r(\tau(\theta))}{(1-\dot r(\tau(\theta)))^3}
,\qquad\omega\in \Om(\Theta).
\ee
Now we extend $q(t)$ smoothly to zero for $t<0$. Then $\tilde\rho*g_\omega\,(t)$
 is defined for all $t$ and agrees 
with $\ov\pi(\omega,t)$ for sufficiently large $t$. Hence (\re{WP4.3}) 
reads as a convolution limit
\be\la{WP4.6}
 \lim_{t\to\infty} \rho_a*g_{\omega}(t)=0,\qquad\omega\in \Om(\Theta).
\ee
Now note that  (\re{WP2.4}) with $k=2,3$ imply that $g^\prime_\omega(\theta)$  is bounded.
Hence (\re{WP4.6}) and (\re{W}) imply by Pitt's extension to Wiener's Tauberian Theorem, cf.~\cite[Thm.~9.7(b)]{Ru},
\be\la{WP4.7}
 \lim_{\theta\to\infty} g_{\omega}(\theta)=0,\qquad\omega\in \Om(\Theta).
\ee
\begin{lemma}\la{WPWi}
 Let conditions (\re{P})--(\re{Q}) and  (\re{C1})-(\re{WP8}) hold, and let    
 $Y(t)\in {\cal E}$ be the corresponding  solution to the Cauchy problem (\re{WP2.1}). 
Then
\be\la{WP4.7'}
\lim_{t\to\infty} \ddot q(t)=0.
\ee
\end{lemma}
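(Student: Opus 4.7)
The plan is to recover the vanishing of $\ddot q(t)$ from the pointwise convergence $g_\omega(\theta)\to 0$ for $\omega\in\Om(\Theta)$ just established in (\re{WP4.7}). Two ingredients are needed: first, converting convergence of the Larmor-type scalar $g_\omega$ back into convergence of the projection $\omega\cdot\ddot q(t)$; second, using the planar constraint (\re{Q}) to assemble sufficiently many projections into the full vector $\ddot q$.

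For the first ingredient, fix $\omega\in\Om(\Theta)$ and recall $g_\omega(\theta)=\ddot r(\tau(\theta))/(1-\dot r(\tau(\theta)))^3$ with $r(\tau)=\omega\cdot q(\tau)$. By (\re{le1}) one has $|\dot r(\tau)|\le \ov q_1\sqrt{1-\Theta^2}<1$ for all $\tau$, so the denominator $(1-\dot r(\tau))^3$ is bounded away from $0$ and from $\infty$. Hence (\re{WP4.7}) forces $\ddot r(\tau(\theta))\to 0$ as $\theta\to\infty$. Since the diffeomorphism $\theta\mapsto\tau(\theta)$ satisfies $\theta=\tau-r(\tau)$ with $|r(\tau)|\le\ov q_0$ bounded, $\theta\to\infty$ is equivalent to $\tau\to\infty$, and we conclude that $\omega\cdot\ddot q(t)\to 0$ as $t\to\infty$ for each fixed $\omega\in\Om(\Theta)$.

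For the second ingredient, the plane condition (\re{Q}) implies $\ddot q^3(t)\equiv 0$, so $\omega\cdot\ddot q=\omega^1\ddot q^1+\omega^2\ddot q^2$ depends only on the planar projection of $\omega$. By the definition (\re{theta}) of $\Theta$, one has $\Theta<1$ in both regimes: in the subluminal case $\ov q_1<1$ one has $\Theta=0$, while in the superluminal case $\ov q_1\ge 1$ one has $\Theta=\ve+\sqrt{1-\ov q_1^{-2}}<1$ for sufficiently small $\ve$. Consequently the two unit vectors $\omega_a=(\sqrt{1-\Theta^2},0,\Theta)$ and $\omega_b=(0,\sqrt{1-\Theta^2},\Theta)$ both belong to $\Om(\Theta)$, and their first two coordinates are linearly independent. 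Applying the preceding step to $\omega_a$ and $\omega_b$ gives $\ddot q^1(t),\ddot q^2(t)\to 0$, and combining with $\ddot q^3(t)\equiv 0$ yields (\re{WP4.7'}).

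The main subtlety I anticipate is ensuring that the Pitt Tauberian argument leading to (\re{WP4.7}) genuinely applies at the specific directions $\omega_a,\omega_b$; this reduces to uniform boundedness of $g_\omega'$ at these $\omega$, which follows from the a priori bounds $|q^{(k)}(t)|\le\ov q_k$ for $k\le 3$ provided by Lemma \re{WPexistence}(v), together with the uniform lower bound on $(1-\dot r(\tau))$ noted above. The planar restriction (\re{Q}) is seen to play its essential role precisely here: without it the set $\Om(\Theta)$ would, in the superluminal regime, lack enough independent directions to recover all three components of $\ddot q$, which is exactly why the authors flag the non-planar case as an open problem.
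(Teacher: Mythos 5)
Your proof is correct and follows essentially the same route as the paper: the paper's own (very terse) argument likewise deduces $\ddot r(t)=\omega\cdot\ddot q(t)\to 0$ for all $\omega\in\Om(\Theta)$ from (\re{WP4.7}), the boundedness of $(1-\dot r)^3$ via (\re{le1}), and the fact that $\theta(t)\to\infty$. Your explicit selection of the two directions $\omega_a,\omega_b\in\Om(\Theta)$ with linearly independent planar projections, combined with $\ddot q^3\equiv 0$ from (\re{Q}), simply spells out the final assembly step that the paper leaves implicit.
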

\begin{proof}
The limit (\re {WP4.7}) holds for any $\omega\in S_1$ with $|\om^3|\ge\Theta$ (see (\re{theta})).
Moreover, $\theta(t)\to\infty$ as $t\to\infty$. 
Hence, $\ddot r(t)=\om\cdot\ddot q(t)\to 0$ as $t\to\infty$ for any 
$\omega\in \Om(\Theta)$.
\end{proof}
\brs
(i) For a point charge $\rho(x)=\delta(x)$
we have $\rho_a(s)=\de(s)$. Hence,
(\re{WP4.6}) implies (\re{WP4.7}) directly, without the application of the 
Wiener Tauberian Theorem.\\
(ii) Condition (\re{W}) is necessary for the implication (\re{WP4.7})$\Rightarrow$(\re{WP4.7'}). 
Indeed, if (\re{W}) is violated, then $\hat\rho_a(\xi)=0$ for some $\xi\in\R$, and with the choice
$g(\theta)=\exp(i\xi \theta)$ we have $\rho_a*g(t)=0$ whereas $g$ does not decay to zero.
 \ers
\begin{cor}
Let conditions of Lemma \re{WPWi}  hold.  Then
\be\la{WPrel}
    \lim_{t\to\infty}\dot q(t)=0.
\ee
\end{cor}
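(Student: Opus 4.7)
The plan is to argue by contradiction, exploiting the already-proven acceleration decay $\ddot q(t)\to 0$ from Lemma \ref{WPWi} together with the a priori position bound $|q(t)|\le\ov q_0$ from (\ref{WP2.4}). Suppose, for contradiction, that $\dot q(t)\not\to 0$ as $t\to\infty$. Then there exist $\ve>0$ and a sequence $t_n\to\infty$ with $|\dot q(t_n)|\ge\ve$. Set
\be
\eta_n:=\sup_{t\ge t_n}|\ddot q(t)|,
\ee
so that $\eta_n\to 0$ by (\ref{WP4.7'}).

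The next step is to show that $\dot q$ stays essentially frozen on long windows $[t_n,t_n+L_n]$ whose length $L_n$ I will choose to diverge slowly enough. Integrating $\ddot q$ yields $|\dot q(t)-\dot q(t_n)|\le L\,\eta_n$ for all $t\in[t_n,t_n+L]$. Choosing $L_n:=\eta_n^{-1/2}$, I get $L_n\to\infty$ while $L_n\eta_n=\sqrt{\eta_n}\to 0$. Taking the inner product with $\dot q(t_n)$, for all sufficiently large $n$ and all $t\in[t_n,t_n+L_n]$,
\be
\dot q(t_n)\cdot\dot q(t)\ge |\dot q(t_n)|^2-|\dot q(t_n)|\sqrt{\eta_n}\ge \ve^2/2.
\ee

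Now I integrate this lower bound in $t$ over $[t_n,t_n+L_n]$. On one hand,
\be
\dot q(t_n)\cdot\bigl(q(t_n+L_n)-q(t_n)\bigr)=\int_{t_n}^{t_n+L_n}\dot q(t_n)\cdot\dot q(t)\,dt\ge L_n\ve^2/2\longrightarrow\infty.
\ee
On the other hand, using (\ref{WP2.4}) with $k=0,1$,
\be
\bigl|\dot q(t_n)\cdot\bigl(q(t_n+L_n)-q(t_n)\bigr)\bigr|\le \ov q_1\cdot 2\ov q_0,
\ee
which is bounded independently of $n$. These two bounds are incompatible, giving the contradiction and hence (\ref{WPrel}).

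No substantive obstacle is anticipated: the whole argument is Barbălat-flavored, and the only genuine choice is that of $L_n$, which must satisfy both $L_n\to\infty$ (to force the displacement to blow up) and $L_n\eta_n\to 0$ (so that the velocity is nearly constant and its direction nearly preserved on the window). The scaling $L_n=\eta_n^{-1/2}$ accomplishes both. The confining hypothesis (\ref{P}) enters only through the uniform bound $|q(t)|\le\ov q_0$, and the Wiener condition (\ref{W}) is not needed here since it was already consumed in establishing $\ddot q(t)\to 0$.
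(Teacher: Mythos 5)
Your proof is correct and follows essentially the same route as the paper: the paper's own proof is the one-line assertion that $\ddot q(t)\to 0$ together with the bound $|q(t)|\le \ov q_0$ forces $\dot q(t)\to 0$, and your windowing argument with $L_n=\eta_n^{-1/2}$ is just the standard elementary justification of that implication spelled out in full. No discrepancy with the paper's approach.
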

\begin{proof}
(\re{WP4.7'}) implies (\re{WPrel}) since $|q(t)|\le \ov q_0$ due to (\re{WP2.4}) with $k=0$.
\end{proof}
\setcounter{equation}{0}
\section{Transitions to stationary states}\label{Tr-sec}
Here we prove our main Theorem \re{t1}. First we show that the set
\be\la{cA}
{\cal A} = \{S_q:~q=(q^1,q^2,0)\in\R^3,~ |q|\le \ov q_0)\}
\ee 
is an attracting subset. It is compact in ${\cal E}_F$ since ${\cal A}$ is homeomorphic to a closed ball in $\R^3$.
\begin{lemma}\la{WPCAT}
 Let conditions of Theorem \ref{t1} hold. Then
\be\la{WPcA} 
Y(t)\tocEF {\cal A},\quad t\longrightarrow \pm\infty.
\ee
\end{lemma}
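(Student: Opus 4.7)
The plan is to argue by compactness: since $\cA$ is compact in $\cE_F$ (being homeomorphic to the closed planar disc $\{q\in\R^3:|q|\le\ov q_0,\,q^3=0\}$), the convergence (\re{WPcA}) is equivalent to the statement that every sequence $t_n\to+\infty$ admits a subsequence $t_{n_k}$ along which $Y(t_{n_k})$ converges in $\cE_F$ to some $S_{q_*}\in\cA$. The case $t\to-\infty$ is analogous, so I restrict to $t\to+\infty$.

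Extracting a convergent particle subsequence is immediate: by (\re{WP2.4}) with $k=0$ and (\re{Q}), the trajectory $\{q(t)\}_{t\in\R}$ lies in the compact disc $\{|q|\le\ov q_0,\,q^3=0\}$, so along some subsequence $q(t_{n_k})\to q_*$ with $|q_*|\le\ov q_0$ and $q_*^3=0$; combined with the relaxation (\re{WPrel}), this yields $(q(t_{n_k}),p(t_{n_k}))\to(q_*,0)$. The problem is therefore reduced to field convergence.

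For the fields I would use the decomposition $\varphi=\varphi_r+\varphi_K$, $\pi=\pi_r+\pi_K$ from Section \re{LW-as}. The Kirchhoff component is controlled by a pointwise estimate in the spirit of (\re{WPbdt}): for $x\in B_R$ and $t\ge 2R$ one has $|x+tz|\sim t$ on $S_1$, whence $|\pi_K(x,t)|+|\nabla\varphi_K(x,t)|={\cal O}(t^{-\sigma})$ with $\sigma>3/2$. The same scheme applied to the first two terms on the right-hand side of (\re{kf}), combined with the pointwise decay $|\varphi_0(x)|={\cal O}(|x|^{1-\sigma})$ obtained by integrating $\nabla\varphi_0$ along rays to infinity, yields $|\varphi_K(x,t)|={\cal O}(t^{1-\sigma})$. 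Integrating over $B_R$ shows that all three Kirchhoff seminorm pieces tend to zero.

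For the retarded component I claim that, in $L^2(B_R)$, the differences $\nabla\varphi_r(\cdot,t)-\nabla s_{q(t)}$, $\varphi_r(\cdot,t)-s_{q(t)}$, and $\pi_r(\cdot,t)$ all tend to zero. The point is that for $x\in B_R$ the integration in (\re{WPretp}) is supported on $|y|\le\ov q_0+R_\rho$, so the retarded time $\tau=t-|x-y|$ lies in a bounded window of length at most $R+\ov q_0+R_\rho$ around $t$, on which (\re{WPrel}) gives $\dot q(\tau)\to 0$ uniformly and hence $q(\tau)-q(t)=\int_\tau^t\dot q(s)\,ds\to 0$ uniformly. Differentiation of (\re{WPretp}) in $x$ or $t$ brings down factors of $\nabla\rho$ or $\dot q$, and dominated convergence using smoothness and compact support of $\rho$ then yields the three limits; continuous dependence $s_q\to s_{q_*}$ in $H^1(B_R)$ as $q\to q_*$ completes the chain. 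Combining the two decompositions produces $Y(t_{n_k})\to S_{q_*}\in\cA$ in $\cE_F$. The main technical step I expect to require care is the uniform handling of the retarded integral for the $\nabla$-derivative and for $\pi_r$, where one must exploit both the a priori bounds (\re{WP2.4}) and the relaxation (\re{WPrel}); everything else is a routine compactness and dominated-convergence argument.
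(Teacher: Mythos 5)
Your proposal is correct and follows essentially the same route as the paper: the same decomposition $\varphi=\varphi_r+\varphi_K$, the same use of the bounded retarded-time window $\tau\in[t-R-\ov q_0-R_\rho,\,t]$ together with the relaxation (\re{WPrel}) to show $\varphi_r(\cdot,t)-s_{q(t)}\to 0$, $\pi_r(\cdot,t)\to 0$ in $L^2(B_R)$, and the pointwise bound (\re{WPbdt}) for the Kirchhoff part. The only difference is packaging: the paper proves directly that $\Vert Y(t)-S_{q(t)}\Vert_R\to 0$ (which gives convergence to the set $\cA$ without extracting subsequences), whereas you route the same estimates through a compactness/subsequence argument; this is immaterial, and your explicit treatment of $\Vert\varphi_K(\cdot,t)\Vert_{L^2(B_R)}$ via the decay of $\varphi_0$ is in fact slightly more careful than the paper's.
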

\begin{proof}
It suffices to verify that for every $R>0$ 
\begin{eqnarray}\la{WP5.1}
\Vert Y(t)- S_{q(t)} \Vert_R=\Vert\varphi(t)-s_{q(t)}\Vert_{H^1(B_R)}+\Vert\pi(t)\Vert_{L^2(B_R)}+|p(t)|
\to 0\,\,{\rm as}\,\,t\to\infty.
\end{eqnarray}
Let us estimate each term separately.
\medskip\\
 i) Convergence (\re{WPrel}) implies that $|p(t)|\to 0$ as $t\to\infty$.
 \medskip\\
 ii)  The  integral representation (\re{WP3.12}) implies that for $|x|<R$
 and $t>R+T_r$,  $T_r=\ov q_0+R_\rho$,  we have
\[
 |\pi_r(x,t)|\le~C\max\limits_{\tau\in[t-R-T_r, t]}|\dot q(\tau)| \int_{|y|<T_r} dy \,\fr 1{|x-y|}|\nabla\rho (y-q(t-|x-y|))|.
\]
 Here the integral is bounded uniformly in $t>R+T_r$ for $x\in B_R$, and therefore
 (\re{WPrel}) implies that $\Vert\pi_r(t)\Vert_{L^2(B_R)}\to 0$ as $t\to\infty$.
 Hence,  $\Vert\pi(t)\Vert_{L^2(B_R)}\to 0$ by (\re{WPRdec}) and (\re{WPbdt}).
 \medskip\\
 iii)
The integral representation (\re{WPretp}) implies for $t>R+T_r$ and $|x|<R$ that
$$
 \varphi_r(x,t)-s_{q(t)}(x) =-\int_{|y|<T_r} dy\fr 1{4\pi|x-y|}\Big(\rho(y-q(t-|x-y|))-\rho(y-q(t))\Big)\,.
$$
The difference $q(t-|x-y|)-q(t)$ may be written as an
 integral depending only on $\dot q(\tau)$ for $\tau\in [t-R-T_r,t]$, 
 which tends to zero as $t\to\infty$ uniformly in $x\in B_R$ due to (\re{WPrel}).
 Hence
 $\Vert\varphi_r(t)-\varphi_{q(t)}\Vert_{L^2(B_R)}\to 0$ as $t\to\infty$.
 Then  $\Vert\varphi(t)-\varphi_{q(t)}\Vert_{L^2(B_R)}\to 0$ by (\re{WPRdec})
and (\re{WPbdt}).
This proves the claim, since $\Vert\nabla(\varphi(t)- \varphi_{q(t)})\Vert_{L^2(B_R)}$ may be estimated
in a similar way.
 \end{proof}
Now we prove the convergences (\re{WP9}).
\begin{lemma}\la{WPCST}
Under conditions  of Theorem \ref{t1} the convergence holds
\be\la{WPcS} 
Y(t)\tocEF {\cal S},\quad t\longrightarrow \pm\infty.
\ee
\end{lemma}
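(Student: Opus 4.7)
The plan is to upgrade the attraction $Y(t)\tocEF\cA$ from Lemma \re{WPCAT} to an attraction to $\cS\subset\cA$ by showing that any subsequential limit configuration $S_{q_*}$ satisfies $\nabla V(q_*)=0$. The two additional ingredients are the acceleration relaxation from Lemma \re{WPWi} and the radial symmetry of $\rho$, which forces the self-force contribution in Newton's equation to vanish.

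First, given any sequence $t_n\to\infty$, Lemma \re{WPCAT} together with the compactness of $\cA$ in $\cE_F$ (it is homeomorphic to a closed disc in the plane $\{q^3=0\}$) provides a subsequence, still denoted $t_n$, and a point $q_*\in\R^3$ with $|q_*|\le\ov q_0$ and $q_*^3=0$, such that $Y(t_n)\to S_{q_*}$ in $\cE_F$. It is enough to prove $q_*\in Z$ for every such limit point.

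Next, I pass to the limit in the particle equation from \re{WP2}, namely
$$\ddot q(t_n)=-\nabla V(q(t_n))+\int\varphi(x,t_n)\,\nabla\rho(x-q(t_n))\,dx.$$
By Lemma \re{WPWi}, $\ddot q(t_n)\to 0$; since $V\in C^2$, $\nabla V(q(t_n))\to\nabla V(q_*)$. For the field-dependent integral, $\nabla\rho(\,\cdot\,-q(t_n))$ is supported in the fixed ball $B_R$ with $R=\ov q_0+R_\rho$, so the $\cE_F$-convergence $\varphi(t_n)\to s_{q_*}$ in $H^1(B_R)$ and $q(t_n)\to q_*$ give
$$\int\varphi(x,t_n)\,\nabla\rho(x-q(t_n))\,dx\;\longrightarrow\;\int s_{q_*}(x)\,\nabla\rho(x-q_*)\,dx,$$
and hence in the limit $\nabla V(q_*)=\int s_{q_*}(x)\,\nabla\rho(x-q_*)\,dx$.

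Finally, the right-hand side vanishes by the radial symmetry of $\rho$. From \re{sq} and the substitution $u=x-q_*$ one has $s_{q_*}(x)=s_0(x-q_*)$ with $s_0(u)$ radial, while $\nabla\rho(u)=\rho_r'(|u|)\,u/|u|$ is odd. Therefore
$$\int s_{q_*}(x)\,\nabla\rho(x-q_*)\,dx=\int s_0(|u|)\,\rho_r'(|u|)\,\frac{u}{|u|}\,du=0.$$
Consequently $\nabla V(q_*)=0$, i.e.\ $q_*\in Z$ and $S_{q_*}\in\cS$. Since every $\cE_F$-limit point of $Y(t)$ along $t\to+\infty$ lies in $\cS$, convergence \re{WPcS} follows (the case $t\to-\infty$ is identical by time-reversal). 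The only step requiring any care is the passage to the limit in the field-dependent integral; this is unproblematic since everything is localized in the fixed compact set $B_R$. The real content of the argument is the exact cancellation produced by the rotational symmetry of $\rho$, which is the mechanical analogue of the statement that an extended charge exerts no net self-force.
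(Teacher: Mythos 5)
Your proof is correct, and it establishes the key fact $\nabla V(q_*)=0$ by a genuinely different mechanism than the paper. Both arguments share the same skeleton: by Lemma \ref{WPCAT} and the compactness of ${\cal A}$ in ${\cal E}_F$, it suffices to show that every subsequential limit $S_{q_*}$ of $Y(t_n)$ has $q_*\in Z$. The paper gets this softly: the omega-limit set is invariant under the flow $U(t)$ (using the continuity of $U(t)$ on ${\cal E}_F$ from Lemma \ref{WPexistence}), and since it is contained in ${\cal A}$ the entire trajectory through a limit point must have the form $S_{Q(t)}$; plugging this ansatz into (\ref{WP2}) forces $\dot Q\equiv 0$ and then $q_+\in Z$. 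You instead pass to the limit directly in the force-balance equation along $t_n$, using the acceleration relaxation $\ddot q(t_n)\to 0$ of Lemma \ref{WPWi} together with the local $H^1$-convergence of the field on the fixed ball $B_{\ov q_0+R_\rho}$, and then cancel the self-force by the radial symmetry of $\rho$. Your route avoids the dynamical-systems machinery (invariance of omega-limit sets, continuity of the flow on the Fr\'echet space) at the price of leaning on the relaxation $\ddot q(t)\to 0$, which is in any case already proved in Section 5. A side benefit of your version is that it makes explicit the cancellation $\int s_{q_*}(x)\nabla\rho(x-q_*)\,dx=0$, which the paper's step ``hence $Q(t)\equiv q_+\in Z$'' uses silently; note also that this cancellation does not actually need radial symmetry, since writing the integral as $-\frac{1}{4\pi}\int\!\!\int \rho(v)\nabla\rho(u)\,|u-v|^{-1}\,du\,dv$ and integrating by parts exhibits an integrand that is antisymmetric under the exchange $u\leftrightarrow v$.
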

\begin{proof}
Lemma \re{WPCAT}  implies that the orbit $O(Y):=\{Y(t):t\in\R\}$ is precompact 
in ${\cal E}_F$ since $\cA$ is the compact set 
in ${\cal E}_F$.
Let us denote by $\Om$ the set of all omega-limit points of 
the orbit in
 ${\cal E}_F$:
$\ov Y\in \Om$ means by definition that 
\be\la{oml}
 Y(t_k)\tocEF\ov Y,\qquad t_k\to\infty.
\ee
It suffices to prove that $\Om\subset\cS$, i.e. that any omega-limit point 
$\ov Y=S_{q_+}$ with some $q_+\in Z$.

First, Lemma \re{WPCAT} implies that $\ov Y\in {\cal A}$.
 Further, $\Om$ is invariant with respect to the dynamical group $U(t)$ with $t\in\R$
due to the continuity of $U(t)$ in ${\cal E}_F$. 
Hence, there exists a $C^2$-curve $t\mapsto Q(t)\in\R^3$ such that
$U(t)\ov Y=S_{Q(t)}$, according to Definition (\re{cA}).
However, for $S_{Q(t)}$ to be a solution of (\re{WP2}) we must have $\dot Q(t)\equiv 0$, 
and hence $Q(t)\equiv q_+\in Z$. Therefore, 
$\ov Y=S_{q_+}\in\cal S$.
\end{proof}

At last, we formalize the implication (\re{WP9}) $\Rightarrow$ (\re{WP10}) by the following definition.
Let ${\cal T}$ be a subset of a metrisable space ${\cal F}$.
\begin{definition}\la{dT}
${\cal T}$ is a trapping set in ${\cal F}$, if for every continuous curve $Y(t)\in C(\R, {\cal F})$ with a precompact 
orbit $O(Y)$ the convergence $Y(t)\tocF {\cal T}$ as $t\to\infty$ implies the convergence 
$Y(t)\tocF T$ as $t\to\infty$ to some point $T\in {\cal T}$.
\end{definition}
For example every discrete set in $\R^3$ is a trapping set in $\R^3$.
\begin{lemma}\la{WPA}
Let the conditions of of Theorem \ref{t1} hold  and let  $Z$ be a trapping set in $\R^3$. 
Then there exist stationary states $S^\pm  \in{\cal S} $ depending on $Y_0$ such that (\re{WP10}) holds.
\end{lemma}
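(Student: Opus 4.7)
\textbf{Proof proposal for Lemma \ref{WPA}.} The plan is to reduce the attraction to a single stationary state to the trapping property of $Z$ in $\R^3$, by tracking the particle coordinate $q(t)$ and using continuity of the map $q\mapsto S_q$.

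First I would extract convergence of $q(t)$ to $Z$ in $\R^3$ from Lemma \ref{WPCST}. Since the seminorm (\ref{WP6}) satisfies
$$
\|Y(t)-S_q\|_R \;\ge\; |q(t)-q|
$$
for every $R>0$, the convergence $Y(t)\tocEF {\cal S}$ immediately implies $\mathrm{dist}(q(t),Z)\to 0$ as $t\to\infty$. Moreover, by (\ref{WP2.4}) with $k=0$ the curve $q(t)\in C(\R,\R^3)$ has a precompact orbit in $\R^3$. Applying the trapping hypothesis on $Z$ to the curve $q(\cdot)$ yields some $q_+\in Z$ such that $q(t)\to q_+$ as $t\to\infty$.

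Next I would promote this to convergence in ${\cal E}_F$. Since $s_q(x)=s_0(x-q)$ with $s_0=-\Delta^{-1}\rho\in \Ho^1$ (here $\rho\in C_0^\infty$ and the integral (\ref{sq}) is a standard Newton potential), translation continuity in $\Ho^1$ shows that the map $q\mapsto S_q=(s_q,0,q,0)$ is continuous from $\R^3$ to ${\cal E}$, hence to ${\cal E}_F$. Therefore $S_{q(t)}\tocEF S_{q_+}$. Combining this with the convergence $\|Y(t)-S_{q(t)}\|_R\to 0$ already established in the proof of Lemma \ref{WPCAT} (see (\ref{WP5.1})), the triangle inequality in each seminorm $\|\cdot\|_R$ gives
$$
\|Y(t)-S_{q_+}\|_R \;\le\; \|Y(t)-S_{q(t)}\|_R+\|S_{q(t)}-S_{q_+}\|_R \;\longrightarrow\; 0,\qquad t\to\infty,
$$
which is exactly $Y(t)\tocEF S_{q_+}$. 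The case $t\to-\infty$ is identical and produces a (possibly different) limit point $q_-\in Z$. Setting $S^\pm:=S_{q_\pm}$ gives (\ref{WP10}).

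The argument is essentially a packaging step, so no real obstacle arises beyond the trivial verifications above; the substantive work has already been done in Lemmas \ref{WPCAT} and \ref{WPCST}. The only point worth being careful about is the continuity of $q\mapsto s_q$ as a map into $\Ho^1$ (not $H^1$, since $s_q\notin L^2$), which however is immediate from the fact that $\nabla s_q(x)=\nabla s_0(x-q)$ with $\nabla s_0\in L^2$ and translations are strongly continuous in $L^2$.
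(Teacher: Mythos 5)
Your proposal is correct and follows essentially the same route as the paper: both arguments project onto the particle coordinate via $Y\mapsto q$, invoke the trapping property of $Z$ in $\R^3$ to get $q(t)\to q_\pm$, and then lift back using the continuity of the correspondence $q\mapsto S_q$. The only (cosmetic) difference is that the paper phrases the lifting step abstractly by asserting that ${\cal S}$ is a trapping set in ${\cal E}_F$ because the projection is a continuous injection on ${\cal S}$, whereas you make it explicit via the already-established convergence $\Vert Y(t)-S_{q(t)}\Vert_R\to 0$ from (\ref{WP5.1}) together with the strong continuity of translations in $L^2$ — a slightly more self-contained packaging of the same idea.
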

\begin{proof} 
The set $Z$ is the image of the set $\cal S$ under the map $I:~(\varphi,\pi,q,p)\mapsto q$.
This map is continuous ${\cal E}_F\to \R^3$ and it is injection on ${\cal S}$.
Therefore ${\cal S}$ is a trapping set in ${\cal E}_F$, because $Z$ is a trapping set in $\R^3$. 
Hence (\re{WP9}) implies (\re{WP10}).
\end{proof}
 \setcounter{equation}{0}
 \section{Linearization at stationary state}\label{lin-sec}
In the rest of the paper we prove Theorem \re{WPC}.
If the particle is close to a stable minimum of $V$, we expect the nonlinear
evolution to be dominated by the linearized dynamics.
In this case the rate of the convergence  (\re{WP10}) corresponds to the decay rate of initial fields.
For notational simplicity we  assume isotropy in the following sense 
\begin{equation}\la{WPhesse}
 \partial_i\partial_j V(q_+)=\nu^2_0\delta_{ij},~~i,j=1,2,3~~ \nu_0>0\,.
\end{equation}
Without loss of generality we take $q_+=0$.
Let $S_0 = (s_0, 0, 0, 0)$ be the stationary state of (\re{WP2}) corresponding to $q_+=0$.
To linearize (\re{WP2}) at $S_0$, we set $\varphi(x,t)=s_0(x)+\psi(x,t)$. 
Then (\re{WP2}) becomes
 \beqn\la{WPnonlin}
\left. \ba{lll}
 \dot{\psi}(x, t)=\pi (x, t), ~~~~~~\dot{\pi}(x, t) &= &\Delta \psi(x, t)+\rho(x)-\rho(x-q(t)),\\
 \dot q(t) =  p(t),~~~~~~~~~~~~~~~~\dot{p}(t)  &= & -\nabla V(q(t))+\ds\int d^{\,3}x\,\,\psi(x, t)\,\nabla\rho(x-q(t)) \\
& &+\,\ds\int d^{\,3}x\,\,s_0(x)[\nabla\rho (x-q(t))-\nabla\rho(x)]\,.
\ea\right|
\eeqn
We denote $X(t)=Y(t)-S_0=(\psi(t),\pi(t),q(t),p(t))\in C(\R, {\cal E})$ and rewrite the nonlinear system (\re{WPnonlin}) in the form
\be\la{WPAB}
 \dot X(t)=AX(t)+B(X(t)).
\ee
Here the linear operator $A$ reads
$$
 A:(\psi,\pi,q,p)
 \mapsto (\pi,~\Delta\psi+\nabla\rho\cdot q,~p,~-(\nu^2_0+\nu^2_1)q+\int d^3x\,\psi(x)\nabla\rho(x)),
 $$
with
\be\la{WPO1}
 \nu^2_1\delta_{ij}=\frac{1}{3}\,\Vert\rho\Vert_{L^2}^2\delta_{ij}=-\,\int d^{\,3}x\,\pa_i s_0(x)\pa_j\rho(x).
 \ee
The factor $1/3$ is due to a spherical symmetry of $\rho(x)$ (see (\re{C})).
 The nonlinear part is given by
 \beqn\la{WPBB}
 B(X)\!\!\!& = &\!\!\!\Big(0,~\,\rho(x)\!-\!\rho(x-\!q)\!-\!\nabla\rho(x)\cdot q,~\, 0,
 ~-\nabla V(q)+\nu_0^2 q+\!\int d^{\,3}x\,\psi(x) [\nabla\rho(x-\!q)\!-\!\nabla\rho (x)]
 \nonumber\\ & & \,\,\,+\,\int d^{\,3}x \,\nabla s_0(x)
 [\rho(x)\!-\!\rho(x-\!q)\!-\!\nabla\rho (x)\cdot q]\,\Big).
\eeqn
Consider the Cauchy problem for the linear equation
\be\la{WPlin}
\dot Z(t)=AZ(t),\quad Z=(\Psi,\Pi,Q,P),\quad t\in\R,
\ee
with initial condition
\be\la{WPic}
Z|_{t=0}=Z_0=(\Psi_0,\Pi_0,Q_0,P_0).
\ee
System (\re{WPlin}) is a formal Hamiltonian system with the quadratic Hamiltonian
\be\la{WPH0}
{\cal H}_0(Z)=\fr 12 \Big( {P^2} +(\nu_0^2+\nu_1^2) Q^2+
\int d^3x\,(|\Pi(x)|^2+|\nabla\Psi(x)|^2-2\Psi(x)\nabla\rho(x)\cdot Q) \Big),
\ee
which is the formal Taylor expansion of ${\cal H}(Y_0+Z)$ up to second order at $Z=0$.
\begin{lemma}\la{WPexlin}
 Let condition (\re{C}) holds and  $Z_0\in{\cal E}$. Then\\
 (i) The Cauchy problem (\re{WPlin}), (\re{WPic}) has a unique solution $Z(t)\in C(\R,{\cal E})$.\\
 (ii) For every $t$, the map $U_0(t):Z_0\mapsto Z(t)$ is continuous both on  
${\cal E}$ and ${\cal E}_F$.\\
 (iii) The energy ${\cal H}_0$ is conserved, i.e.
 \be\la{WPec}
 {\cal H}_0(Z(t))={\cal H}_0(Z_0),\quad t\in\R.
 \ee
 iv) The estimate holds
 \be\la{WPlinb}
 \Vert Z(t)\Vert_{\cal E} \leq C,\quad t\in\R
 \ee
 with $C$ depending only on the norm $\Vert Z_0\Vert_{\cal E}$ of the initial state.
 \end{lemma}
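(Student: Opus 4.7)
The idea is to realise $A$ as a bounded perturbation of a decoupled generator and then exploit the Hamiltonian structure for the global bounds. Write $A=A_0+K$, where
$$
A_0:(\Psi,\Pi,Q,P)\mapsto(\Pi,\,\Delta\Psi,\,P,\,-(\nu_0^2+\nu_1^2)Q)
$$
is the direct sum of the free wave generator on $\Ho^1\oplus L^2$ and a linear harmonic oscillator on $\R^3\oplus\R^3$, both generating $C_0$-groups by classical theory, and $K(\Psi,\Pi,Q,P)=(0,\,\nabla\rho\cdot Q,\,0,\,\int\Psi\,\nabla\rho\,dx)$ is the coupling. Using $\rho\in C_0^\infty(\R^3)$, one has $\Vert\nabla\rho\cdot Q\Vert_{L^2}\le|Q|\,\Vert\nabla\rho\Vert_{L^2}$ and, by the Sobolev embedding $\Ho^1\hookrightarrow L^6$ combined with the compact support of $\nabla\rho$, $|\int\Psi\,\nabla\rho|\le C\Vert\nabla\Psi\Vert_{L^2}$; hence $K$ is bounded on $\cE$. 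Standard bounded-perturbation theory for $C_0$-groups then gives (i) and the $\cE$-part of (ii) automatically, together with a group $U_0(t)$ solving (\re{WPlin})--(\re{WPic}).

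\noindent For (iii), I would first verify $\frac{d}{dt}{\cal H}_0(Z(t))=0$ on smooth, compactly supported initial data, where integration by parts is routine; the identity then extends to all $Z_0\in\cE$ by density and the $\cE$-continuity of $U_0(t)$ already established.

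\noindent The main obstacle is (iv), since the cross term $-2\int\Psi\,\nabla\rho\cdot Q\,dx$ in ${\cal H}_0$ is indefinite. I would complete the square using $\Delta s_0=\rho$. Integration by parts gives $2\int\nabla\Psi\cdot\nabla(\nabla s_0\cdot Q)\,dx=-2\int\Psi\,\nabla\rho\cdot Q\,dx$, while the identity (\re{WPO1}) combined with $\Delta s_0=\rho$ yields $\Vert\nabla(\nabla s_0\cdot Q)\Vert_{L^2}^2=\nu_1^2|Q|^2$ by one further integration by parts. Substituting into (\re{WPH0}) one finds
$$
{\cal H}_0(Z)=\fr 12\Bigl(|P|^2+\nu_0^2|Q|^2+\Vert\Pi\Vert_{L^2}^2+\Vert\nabla(\Psi+\nabla s_0\cdot Q)\Vert_{L^2}^2\Bigr),
$$
a manifestly non-negative quadratic form. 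The trivial estimate $\Vert\nabla\Psi\Vert_{L^2}^2\le 2\Vert\nabla(\Psi+\nabla s_0\cdot Q)\Vert_{L^2}^2+2\nu_1^2|Q|^2$, together with the reverse bound, shows that ${\cal H}_0(Z)$ and $\Vert Z\Vert_\cE^2$ are equivalent. Combined with (iii), this gives $\Vert Z(t)\Vert_\cE^2\le C\,{\cal H}_0(Z_0)\le C'\Vert Z_0\Vert_\cE^2$, proving (iv).

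\noindent For the $\cE_F$-continuity part of (ii), I would invoke the Duhamel formula
$$
Z(t)=e^{tA_0}Z_0+\int_0^t e^{(t-s)A_0}KZ(s)\,ds.
$$
The free group $e^{tA_0}$ is $\cE_F$-continuous: for the wave component this is the standard finite-propagation property (Kirchhoff formula), and for the oscillator component it is explicit. Moreover $K$ is $\cE_F$-continuous since its nontrivial components are the compactly supported function $\nabla\rho\cdot Q$, which is continuous in $Q$, and the functional $\int\Psi\,\nabla\rho$, which only involves $\Psi$ on $B_{R_\rho}$ and is therefore controlled by the seminorm $\Vert\cdot\Vert_{R_\rho}$ via Sobolev. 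Iterating the Duhamel identity yields $\cE_F$-continuity of $U_0(t)$, completing the proof.
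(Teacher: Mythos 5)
Your proposal is correct and takes essentially the same route the paper intends: well-posedness and continuity by standard bounded-perturbation semigroup arguments (as in the cited Lemma 2.1 of the nonlinear case), and the uniform bound (iv) from energy conservation together with positivity and coercivity of ${\cal H}_0$ obtained by completing the square. Note that your completed square $\nabla(\Psi+\nabla s_0\cdot Q)$, with $\Vert\nabla(\nabla s_0\cdot Q)\Vert_{L^2}^2=\nu_1^2|Q|^2$, is the correct one: the expression $|\nabla\Psi(x)+\rho(x)Q|^2$ displayed after the lemma in the paper would instead yield the coefficient $\nu_0^2-2\nu_1^2$ on $|Q|^2$ (since $\Vert\rho\Vert_{L^2}^2=3\nu_1^2$) and appears to be a typo.
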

 The key role in the proof is played  the positivity of  the Hamiltonian (\re{WPH0}):
 $$
 2{\cal H}_0(Z)={P^2} + \nu_0^2 Q^2+\int d^3x\,(|\Pi(x)|^2+
 |\nabla\Psi(x)+\rho(x) Q)|^2\geq 0.
 $$
 Thus (\re{WPlinb}) follows from (\re{WPec}) because of $\nu_0>0$.
The positivity of ${\cal H}_0$ is also obvious from (\re{WP1r}).

In \cite{KK16} we proved the following  long-time decay of the linearized dynamics
in the weighted Sobolev norms.
\begin{pro}\la{TDL}
Let conditions (\re{C})--(\re{W}) hold, and let $Z_0\in{\cal E}_\al$ with some $\al>1$.  Then 
\be\la{Z-dec}
\Vert U_0(t)Z_0\Vert_{-\al}\le C(\rho,\al)(1+|t|)^{-\al}\Vert Z_0\Vert_{\al},
\qquad t\in\R.
\ee
\end{pro}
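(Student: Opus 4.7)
The plan is to prove the estimate via spectral and Fourier analysis of the generator $A$. Since the quadratic Hamiltonian $\mathcal{H}_0$ in (\re{WPH0}) is non-negative and $\nu_0>0$, the operator $A$ is anti-self-adjoint with respect to the Hilbert structure induced by $\mathcal{H}_0$, so its spectrum lies on the imaginary axis $i\R$. The propagator then admits the Fourier representation
\begin{equation*}
U_0(t)Z_0 \;=\; \frac{1}{2\pi i}\int_{\R} e^{i\omega t}\bigl[R(i\omega+0)-R(i\omega-0)\bigr]Z_0\,d\omega,
\end{equation*}
where $R(\lambda)=(A-\lambda)^{-1}$, and the whole problem reduces to analyzing the jump of the resolvent across the continuous spectrum as a bounded map $\cE_\al\to\cE_{-\al}$.

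The second step is to compute $R(\lambda)$ by exploiting the block structure of $A$. Writing $(A-\lambda)Z=Z_0$ componentwise, the field variables $(\Psi,\Pi)$ can be eliminated by inverting $\lambda^2-\Delta$, and the momentum $P$ by its algebraic relation to $Q$, so the system reduces to a closed $3\times 3$ equation $M(\lambda)Q=\mbox{(explicit source)}$ with reduced symbol
\begin{equation*}
M(\lambda) \;=\; \lambda^2\,\mathrm{Id} + (\nu_0^2+\nu_1^2)\,\mathrm{Id} \;-\; \langle \na\rho,(\lambda^2-\Delta)^{-1}\na\rho\rangle.
\end{equation*}
Its boundary values $M(i\omega\pm 0)$ involve $|\hat\rho(k)|^2$ integrated against $(|k|^2-\omega^2\mp i0)^{-1}$, and the Wiener condition (\re{W}) together with the positivity of $\mathcal{H}_0$ guarantee $\det M(i\omega\pm 0)\ne 0$ for every $\omega\in\R$. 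Inverting $M(i\omega\pm 0)$ and combining with the standard Agmon--Kato weighted estimates for the free Laplacian yields the limiting absorption principle: $R(i\omega\pm 0):\cE_\al\to\cE_{-\al}$ is bounded, for $\al>1$, uniformly on compact sets of $\omega$ and with appropriate decay as $|\omega|\to\infty$.

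The final step is to extract $(1+|t|)^{-\al}$ decay by integration by parts in $\omega$. For integer $\al$, one integrates by parts $\al$ times in the Fourier inversion formula, gaining a factor $(it)^{-\al}$ provided $\pa_\omega^\al[R(i\omega+0)-R(i\omega-0)]$ is integrable as an operator $\cE_\al\to\cE_{-\al}$; each derivative in $\omega$ costs one power of weight on the initial data through the identity $\pa_\omega(\lambda^2-\Delta)^{-1}\sim(\lambda^2-\Delta)^{-1}(\lambda^2-\Delta)^{-1}$. Non-integer $\al$ then follows by real interpolation. The main obstacle, and the technically hardest step, is the behavior at the threshold $\omega=0$: the free resolvent $(-\Delta-\omega^2-i0)^{-1}$ has only H\"older regularity in $\omega$ there, and one must establish a Jensen--Kato type expansion
\begin{equation*}
R(i\omega+0)-R(i\omega-0) \;=\; \omega\,T_0 + \omega^2 T_1 + \cdots + \omega^{\lceil\al\rceil}T_{\lceil\al\rceil-1}(\omega),
\end{equation*}
with remainders controlled in $\cE_\al\to\cE_{-\al}$. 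The Wiener condition plays its decisive role precisely here by ensuring that the leading term of $M$ near $\omega=0$ is non-degenerate, so that no threshold resonance obstructs the expansion and the compensating smoothness in $M(i\omega\pm 0)^{-1}$ cancels the $|\omega|^{-1}$ singularity of the free resolvent.
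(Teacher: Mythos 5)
The paper does not actually prove Proposition \re{TDL} here: it is imported verbatim from \cite{KK16}, and the only trace of the argument visible in this paper is in the proof of Lemma \re{WPl9.1}, where the solution of the linearized system is written as $(Q,P)={\cal L}*(0,f)$ with $f(\tau)=\langle W(\tau)(\Psi_0,\Pi_0),\na\rho\rangle$, the field recovered by Duhamel against the free group $W$, and ${\cal L}(t)={\cal O}(|t|^{-N})$ for all $N$. So the actual mechanism is: eliminate the field first, obtain a closed $3\times3$ integro-differential equation for $Q$ whose symbol is exactly your $M(\lam)$, show $M(\lam)^{-1}$ is analytic and nonvanishing on $\overline{\{\Re\lam\ge0\}}$ (Wiener condition for $\lam=i\om$, $\om\ne0$, where it forces $\Im M\ne0$; the positivity $\nu_0^2=d^2V>0$ at $\om=0$, since $M(0)=\nu_0^2+\nu_1^2-\langle\na\rho,(-\De)^{-1}\na\rho\rangle=\nu_0^2$), conclude rapid decay of ${\cal L}$, and then let the \emph{free} weighted decay of Proposition \re{TDW} carry the rate $(1+|t|)^{-\al}$. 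You correctly identify the reduced symbol $M(\lam)$ and the joint role of the Wiener condition and positivity, so the skeleton of your plan is sound; but your final paragraph misattributes the threshold non-degeneracy to the Wiener condition, whereas at $\om=0$ it is precisely $\nu_0>0$ that saves you and the Wiener condition is irrelevant there.

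Two points in your plan are genuine gaps. First, the threshold difficulty you flag does not exist in 3D: the boundary values of $(\lam^2-\De)^{-1}$ on $\lam=i\om$ have kernel $e^{-i\om|x-y|}/(4\pi|x-y|)$, which is \emph{analytic} in $\om$ across $\om=0$ as a map $L^2_\al\to L^2_{-\al}$, so no Jensen--Kato expansion in powers of $\om$ is needed; the half-power/H\"older phenomenon lives in the Schr\"odinger spectral variable $\zeta=\om^2$, not here. Second, and more seriously, your integration-by-parts-in-$\om$ scheme does not close at high frequencies: the wave equation has no smoothing, so $\pa_\om^k[R(i\om+0)-R(i\om-0)]$ (whose kernel picks up factors $|x-y|^k$ but no decay in $\om$) is not integrable over $\om\in\R$ as an operator $\cE_\al\to\cE_{-\al}$ without spending Sobolev regularity of the data that the hypothesis $Z_0\in{\cal E}_\al$ does not provide. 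This is exactly why the argument of \cite{KK16} is organized the other way around: the free part of the evolution is estimated by the already-known weighted decay of $W(t)$ (Proposition \re{TDW}, proved in \cite{3w} by a direct low/high frequency splitting using the strong Huygens principle), and the Fourier--Laplace analysis is confined to the finite-dimensional kernel ${\cal L}$, where the compact support of $\rho$ makes the symbol entire and the high-frequency behavior harmless. To repair your proof you would either need to add this splitting or restrict the resolvent analysis to the reduced $3\times3$ block as the paper's source does.
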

Similar decay also holds for the dynamical group  $W(t)$ of 3D free wave equation. 
\begin{pro}\la{TDW} (cf. \cite[Proposition 2.1]{3w}
and \ci{KK2012}. 
Let $(\vp_0,\pi_0)\in{\cal F}_\al$ with some $\al>1$.  Then 
\be\la{WZ-dec}
\Vert W(t)(\vp_0,\pi_0)\Vert_{-\al}\le C(\al)(1+|t|)^{-\al}\Vert (\vp_0,\pi_0)\Vert_{\al},
\qquad t\in\R.
\ee
\end{pro}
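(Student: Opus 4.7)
The plan is to exploit the strong Huygens principle in three spatial dimensions, which asserts that the fundamental solution of the wave equation is supported on the light cone $|x-y|=|t|$, and to combine this with a dyadic spatial decomposition of the initial data. By time reversal it suffices to treat $t\ge 0$; the case $0\le t\le 1$ follows immediately from energy conservation together with the embedding ${\cal F}_\al\hookrightarrow {\cal F}_{-\al}$, so I may assume $t\ge 1$.

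I would introduce a smooth partition of unity $\{\chi_n\}_{n\ge 0}$ with $\chi_0$ supported in $\{|x|\le 2\}$ and $\chi_n$ supported in the annulus $A_n:=\{2^{n-1}\le|x|\le 2^{n+1}\}$ for $n\ge 1$, and set $(\vp_0^n,\pi_0^n):=(\chi_n\vp_0,\chi_n\pi_0)$. Writing $(\vp^n(t),\pi^n(t)):=W(t)(\vp_0^n,\pi_0^n)$, the strong Huygens principle places the support of $(\vp^n(t),\pi^n(t))$ inside the shell $\{t-2^{n+1}\le|x|\le t+2^{n+1}\}$. When $2^{n+1}\le t/2$ this shell lies in $\{|x|\ge t/2\}$, so $(1+|x|)^{-\al}\le Ct^{-\al}$ there; combined with energy conservation and the bound $(1+|x|)^\al\sim 2^{n\al}$ on $A_n$, this yields
\[
\Vert(\vp^n(t),\pi^n(t))\Vert_{-\al}\le Ct^{-\al}\Vert(\vp_0^n,\pi_0^n)\Vert_0\le Ct^{-\al}\,2^{-n\al}\Vert(\vp_0^n,\pi_0^n)\Vert_\al.
\]
In the complementary regime $2^{n+1}>t/2$ I use $(1+|x|)^{-\al}\le 1$ and energy conservation to get $\Vert(\vp^n(t),\pi^n(t))\Vert_{-\al}\le C\,2^{-n\al}\Vert(\vp_0^n,\pi_0^n)\Vert_\al$. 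Summing over $n$ via the triangle inequality and Cauchy--Schwarz --- with $\sum_n 2^{-2n\al}<\infty$ since $\al>1$, and the tail $\bigl(\sum_{n\ge n_0} 2^{-2n\al}\bigr)^{1/2}\sim 2^{-n_0\al}\sim t^{-\al}$ for $2^{n_0}\sim t$ --- produces the claimed estimate $\Vert W(t)(\vp_0,\pi_0)\Vert_{-\al}\le Ct^{-\al}\Vert(\vp_0,\pi_0)\Vert_\al$.

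The main obstacle is the localization step itself: since ${\cal F}_\al$ controls $\na\vp_0$ but not $\vp_0$ directly (the space $\Ho^1_\al$ is homogeneous), multiplying by the cutoff $\chi_n$ produces the commutator term $(\na\chi_n)\vp_0$, whose weighted norm must be controlled by a Hardy--Sobolev inequality, after which the almost-orthogonality $\sum_n\Vert(\vp_0^n,\pi_0^n)\Vert_\al^2\le C\Vert(\vp_0,\pi_0)\Vert_\al^2$ must be verified. A cleaner variant, in the style of \ci{3w,KK2012}, is to decompose $\na\vp_0$ and $\pi_0$ directly into dyadic pieces in physical space and to recover $\vp_0^n$ from $\na\vp_0^n$ by solving a Poisson-type problem locally. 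Either variant reduces the argument to the estimates above.
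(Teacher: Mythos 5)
Your proof is correct in substance, but be aware that the paper itself contains no proof of this proposition: it is imported verbatim from \cite[Proposition 2.1]{3w} and \cite{KK2012}, where the bound is derived by direct estimation of the explicit wave propagator (Kirchhoff's spherical--means representation and weighted resolvent/kernel estimates). Your route is therefore a genuinely different, self-contained alternative: dyadic decomposition of the data over annuli $A_n$, energy conservation on each piece, and the strong Huygens principle to confine $W(t)(\vp_0^n,\pi_0^n)$ to the shell $\{x:\ t-2^{n+1}\le|x|\le t+2^{n+1}\}$. The bookkeeping is right: the far pieces ($2^{n+1}\le t/2$) contribute $Ct^{-\al}2^{-n\al}\Vert(\vp_0^n,\pi_0^n)\Vert_{\al}$, the near pieces contribute $C2^{-n\al}\Vert(\vp_0^n,\pi_0^n)\Vert_{\al}$ with $2^{n}\gtrsim t$, and Cauchy--Schwarz in $n$ yields the global rate $t^{-\al}$, which the example of a unit bump in $\pi_0$ shows to be sharp --- so nothing is lost relative to the cited kernel estimates, and your argument is arguably more elementary. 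The one step you flag, the commutator $(\na\chi_n)\vp_0$, is genuinely needed because $\Ho^1_{\al}$ is homogeneous, but it closes for $\al>1$: since $|\na\chi_n|\le C(1+|x|)^{-1}$ on its support with finite overlap, it suffices to have the weighted Hardy inequality $\Vert(1+|x|)^{\al-1}\vp_0\Vert_{L^2}\le C\Vert(1+|x|)^{\al}\na\vp_0\Vert_{L^2}$, which follows near the origin from the Sobolev embedding $\Ho^1\subset L^6$ and away from the origin from the homogeneous Hardy inequality $\Vert\,|x|^{\al-1}u\Vert_{L^2}\le \frac{2}{2\al+1}\Vert\,|x|^{\al}\na u\Vert_{L^2}$, and extends to $\Ho^1_{\al}$ by density; this simultaneously gives the almost-orthogonality $\sum_n\Vert(\vp_0^n,\pi_0^n)\Vert_{\al}^2\le C\Vert(\vp_0,\pi_0)\Vert_{\al}^2$. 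With that inequality made explicit, your proof is complete and could replace the citation.
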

We will use both these decays in the next section.

\section{A nonlinear Huygens  principle}\label{nH-sec}
\setcounter{equation}{0}
The following lemma is a version of strong Huygens principle for the nonlinear system (\re{WP2}).
 Let $M_*$ be a fixed number, $M_*>3R_\rho +1$.
\begin{lemma}\la{WPl9.1}
Let  conditions of Theorem \re{WPC} hold and let $\delta>0$ be an arbitrary fixed number. 
Then for sufficiently large $t_*>0$ there exists a solution
$$
 Y_*(t)=(\varphi_*(x,t),\pi_*(x,t),q_*(t),p_*(t))\-\in C([t_*,\infty),~{\cal E})
$$
to the system (\re{WP2}) such that \\
(i) $Y_*(t)$ coincides with $Y  (t)$ in some future cone,
\beqn\la{WP9.1}
\ba{rll}
\varphi_*(x,t)&=\varphi(x,t)& ~~~for~~|x|<t-t_*,\\
q_*(t)&=q(t)     & ~~~for~~t>t_*.
\ea
\eeqn
(ii) $Y_*(t_{*})$ admits a decomposition
$Y_*(t_*)=S_0+K_0+Z_0$, where $Z_0=(\Psi_0,\Pi_0,Q_0,P_0)$ satisfies
\be\la{WPCS'}
 \Psi_0(x)=\Pi_0(x)=0~~for~~|x|\geq M_*~,
\ee
\be\la{WP9.3}
\Vert Z_0\Vert_{\al}\leq \delta,
\ee
and $K_0$ satisfies
\be\la{WPA1}
 \Vert U_0(\tau)K_0\Vert_{-\al}\leq C(1+t_*+\tau)^{-\al},\quad \tau>0,
\ee
where $C=C(\al)$ does not depend on $\de$.
\end{lemma}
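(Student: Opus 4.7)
The plan is to perform a ``surgery'' on $Y(t_*)$: the near field and particle data are essentially preserved, while the far field is identified with a decayed linearized evolution of the initial data. The needed dispersion bound (\ref{WPA1}) is built into $K_0$ by taking it as a linear evolution already running for time $t_*$; the smallness of $Z_0$ comes from the local convergence $Y(t)\tocEF S_{q_+}$ combined with Proposition \ref{TDL}.

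I would first construct $K_0$. Since $Y_0\in{\cal E}_\al$, set $\tilde Y_0:=(\varphi_0,\pi_0,0,0)\in{\cal E}_\al$, so that $\Vert\tilde Y_0\Vert_\al\le\Vert Y_0\Vert_\al$ independently of $\de$, and then define $K_0:=U_0(t_*)\tilde Y_0$. Proposition \ref{TDL} gives at once
\be
\Vert U_0(\tau)K_0\Vert_{-\al}=\Vert U_0(t_*+\tau)\tilde Y_0\Vert_{-\al}\le C(\al)(1+t_*+\tau)^{-\al}\Vert\tilde Y_0\Vert_\al,
\ee
which is precisely (\ref{WPA1}) with a constant $C$ depending only on $\al$ and $Y_0$, hence independent of $\de$.

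Next I would pick a smooth radial cutoff $\chi$ with $\chi\equiv 1$ on $B_{M_*-1}$ and $\chi\equiv 0$ outside $B_{M_*}$, and define $Z_0:=\chi(Y(t_*)-S_0-K_0)$ and $Y_*(t_*):=S_0+K_0+Z_0$. Condition (\ref{WPCS'}) then holds by the support of $\chi$. To verify (\ref{WP9.3}) I estimate
\be
\Vert Z_0\Vert_\al\le C_{M_*}\Big(\Vert Y(t_*)-S_0\Vert_{H^1(B_{M_*})\oplus L^2(B_{M_*})}+\Vert K_0\Vert_{H^1(B_{M_*})\oplus L^2(B_{M_*})}\Big)+|q(t_*)|+|p(t_*)|.
\ee
Both $H^1\oplus L^2$ norms on $B_{M_*}$ tend to zero as $t_*\to\infty$: the first by the hypothesis (\ref{WPconv1}) together with the relaxation of $p$ and $\dot q$ from \S\ref{relax-sec}, and the second because $\Vert K_0\Vert_{-\al}\le C(1+t_*)^{-\al}$ by Proposition \ref{TDL} and the weight is bounded on $B_{M_*}$. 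Choosing $t_*$ sufficiently large then yields $\Vert Z_0\Vert_\al\le\de$.

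Finally, Lemma \ref{WPexistence} furnishes a global solution $Y_*\in C([t_*,\infty),{\cal E})$. By construction $Y_*(t_*)=Y(t_*)$ on $B_{M_*-1}$ and the particle data match; provided $q_*(t)=q(t)$, the difference $\varphi_*-\varphi$ satisfies the free wave equation, and the 3D strong Huygens principle confines its support to a propagating spherical shell moving outward from the initial mismatch, so that inside the future cone $|x|<t-t_*$ (up to shifts of order $M_*$) the difference vanishes. The identity $q_*(t)=q(t)$ is then propagated by a self-consistent bootstrap: the coupling $\rho(x-q)$ has support of radius $R_\rho$, the particle stays in $|q(t)|\le\ov q_0$, and the choice $M_*>3R_\rho+1$ is exactly what makes the relevant light cones and coupling radii geometrically compatible. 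The main obstacle is this last bootstrap: $Y_*(t_*)$ genuinely differs from $Y(t_*)$ outside $B_{M_*}$, so one must verify that the outward-travelling mismatch never returns to interfere with the particle dynamics, requiring a delicate synchronization between 3D strong Huygens and the nonlinear particle-field coupling.
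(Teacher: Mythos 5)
Your construction of $K_0$ and the verification of (\re{WPA1}) and (\re{WP9.3}) are plausible as far as they go, but part (i) --- and in particular the exact identity $q_*(t)=q(t)$ for all $t>t_*$ --- is where the argument breaks down, and it is the heart of the lemma. Your $Y_*(t_*)$ differs from $Y(t_*)$ precisely on the exterior region $|x|\ge M_*-1$. For free waves, a mismatch supported in the exterior set $\{|y|\ge M_*-1\}$ does \emph{not} stay outside: by the Kirchhoff formula the difference field at a point $x$ vanishes only as long as the sphere $\{|y-x|=t-t_*\}$ stays inside $B_{M_*-1}$, i.e.\ only for $t-t_*<M_*-1-|x|$. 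The mismatch therefore propagates inward as well as outward, reaches the particle at time roughly $t_*+M_*-1-R_\rho$, alters the force on it, and from then on $q_*\ne q$; once the trajectories differ the sources no longer cancel, so even the reduction of the difference to a free wave fails. The ``delicate synchronization'' you defer at the end is thus not a technicality: a spatial cutoff of the state at time $t_*$ cannot deliver (\re{WP9.1}).

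The paper avoids this by performing the surgery in the \emph{past of the retarded potential} rather than in space at time $t_*$: it replaces $q(\cdot)$ by a trajectory $q_\epsilon(\cdot)$ with $q_\epsilon(t)=0$ for $t<t_{0,\epsilon}$, $q_\epsilon(t)=q(t)$ for $t>t_{1,\epsilon}$ and $|q_\epsilon|+|\dot q_\epsilon|<\epsilon$ everywhere, and sets $\varphi_\epsilon=\varphi_{r,\epsilon}+\varphi_K$ with $\varphi_{r,\epsilon}$ the Li\'enard--Wiechert potential (\re{WP5.12}) of $q_\epsilon$. Then: (a) for $|x|<\epsilon+R_\rho$ --- the region containing the particle and the support of the source for large times --- all relevant retarded times exceed $t_{1,\epsilon}$, so $\varphi_\epsilon\equiv\varphi$ there for \emph{all} $t>t_*$ and the particle equation holds with the original $q(t)$; (b) for $|x|>\epsilon+R_\rho$ the source vanishes and $\varphi_{r,\epsilon}$ solves the free wave equation, so $Y_*$ is a genuine solution; (c) for $|x|>t-t_\epsilon$ the retarded trajectory is $0$, whence $\varphi_{r,\epsilon}=s_0$ there, which yields the compact support (\re{WPCS'}) and the $\mathcal{O}(\epsilon)$ smallness of $Z_0=(\varphi_{r,\epsilon}(t_*)-s_0,\dot\varphi_{r,\epsilon}(t_*),q(t_*),p(t_*))$. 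Finally $K_0=(\varphi_K(t_*),\dot\varphi_K(t_*),0,0)$ is the \emph{free} Kirchhoff evolution of the initial fields (not $U_0(t_*)\tilde Y_0$), and (\re{WPA1}) is obtained from the representation of $U_0(\tau)K_0$ through $W(\tau+t_*)[\varphi_0,\pi_0]$ and a rapidly decaying convolution kernel. Any repair of your argument essentially has to rebuild this structure, so that the field near the particle is literally unchanged for all future times.
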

\begin{proof}
The convergence (\re{WPconv1}) with $q_+=0$ implies that for every $\epsilon>0$ 
there exist $t_{\epsilon}$ such that
\be\la{WP9.6}
 |q(t)|+|\dot q(t)|<\epsilon {\rm ~~~for~~} t>t_{\epsilon}.
\ee
We may assume that $t_{\epsilon}>1/{\epsilon}$. Denote
\be\la{WP9.6'}
t_{0,\epsilon}=t_{\epsilon}+R_\rho,
~~t_{1,\epsilon}=t_{0,\epsilon}+1,
~~t_{2,\epsilon}=t_{1,\epsilon}+\epsilon+R_\rho,
~~t_{3,\epsilon}=t_{2,\epsilon}+\epsilon+R_\rho.
\ee
 Then there exist a function $q_\epsilon(\cdot)\in C^1(\R)$ such that
 \beqn\la{WP9.7}
 q_\epsilon(t)=
 \left\{\ba{rl}
 q(t),&t>t_{1,\epsilon},\\
 0,&t<t_{0,\epsilon},
 \ea\right.{\rm ~~~~and~~}
 |q_\epsilon(t)|+|\dot q_\epsilon(t)|<\epsilon
 {\rm ~~~for~~all~~} t\in\R
\eeqn
 by suitable interpolation. Now  we define the modification
$\varphi_\ve(x,t)$ of the solution  $\varphi(x,t)=\varphi_{r}(x,t)+\varphi_K(x,t)$:
\be\la{WP5.11'}
 \varphi_\epsilon(x,t)=\varphi_{r,\ve}(x,t)+\varphi_{K}(x,t){\rm ~~for~~}x\in\R^3{\rm ~~and~~}t>0,
 \ee
 where
 \be\la{WP5.12}
 \varphi_{r,\epsilon}(x,t)=
 -\int~d^3y~\fr {1}{4\pi|x-y|}\rho(y-q_\epsilon(t-|x-y|))~.
 \ee
For $|x|<t-t_{2,\epsilon}$ and $|y|\le R_\rho+\epsilon$, we have
$$
t-|x-y|>t-(|x|+|y|)>t-(t-t_{2,\epsilon}+R_\rho+\epsilon)=t_{1,\epsilon}.
$$
Then  (\re{WP5.12}), (\re{WPretp}), and (\re{WP9.7}) imply
\be\la{WP5.14}
 \varphi_{r,\epsilon}(x,t)=
 \varphi_r(x,t)~~{\rm for~~}|x|<t-t_{2,\epsilon}.
\ee
Further, for $|x|>t-t_\epsilon$ and $|y|\le R_\rho$, we obtain
$$
t-|x-y|<t-(|x|-|y|)<t-(t-t_\epsilon-R_\rho)=t_{0,\epsilon}.
$$
Then $q_\epsilon(t-|x-y|)=0$ by (\re{WP9.7}), and hence
\be\la{WP9.12}
 \varphi_{r,\epsilon}(x,t)=s_0(x)~~{\rm for~~}|x|>t-t_\epsilon.
\ee
Moreover, $\varphi_{r,\epsilon}(\cdot,\cdot)\in C^1(\R^4)$, and (\re{WP9.7}) implies
\be\la{WP9.13}
 \sup_{x\in\R^3,\,t\in\R}(|\dot\varphi_{r,\epsilon}(x,t)|+
 |\nabla\varphi_{r,\epsilon}(x,t)-\nabla s_0(x)|+|\varphi_{r,\epsilon}(x,t)-s_0(x)|)={\cal O}(\epsilon).
\ee
 Now we  define $t_*:=t_{3,\epsilon}$, and
 \be\la{WP9.14}
 Y_*(t)=(\varphi_\epsilon(t),\dot\varphi_\epsilon(t),q(t), p(t)),~~
 K_0=(\varphi_K(t_*),\dot\varphi_K(t_*),0,0),~~
 Z_0=(\varphi_{r,\epsilon}(t_*)-s_0, \dot\varphi_{r,\epsilon}(t_*), q(t_*),p(t_*)).
\ee
It is easy to check that $t_{*}$ and $Y_*(t)$, $K_0$, $Z_0$
satisfy all requirements of Lemma \re{WP9.1}, provided $\epsilon>0$ be sufficiently small.

First,  $Y_*(x,t)$ is a solution to (\re{WP2}) for $t>t_*$.
Indeed, for $|x|<\epsilon+R_\rho$ one has $t-|x-y|>t_{3,\epsilon}-2\epsilon-2R_\rho=t_{1,\epsilon}$.
Since, (\re{WP9.6'}) implies that $q_\epsilon(t-|x-y|)=q(t-|x-y|)$ and
$\varphi_\epsilon(x,t)= \varphi(x,t)$ then.
Hence, $Y_*(t)=Y(t)$ in the region $|x|<\epsilon+R_\rho$.
On the other hand, (\re{WP9.6}) and (\re{WP9.7}) imply
$$
 \rho(x-q_\epsilon(t))= \rho(x-q(t))= 0{\rm ~~for~~}|x|>\epsilon+R_\rho
 {\rm ~~and~~}t>t_\epsilon.
$$
Hence, $\varphi_{r,\epsilon}(x,t)$ satisfies the equation 
\be\la{seq}
\ddot \varphi(x,t)=\Delta \varphi(x,t){\rm ~~for~~}|x|>\epsilon+R_\rho
 {\rm ~~and~~}t>t_\epsilon.
\ee
Therefore,  $Y_{*}(t)$ satisfies  (\re{WP2}) in the region $|x|>\epsilon+R_\rho$. Now
(\re{WP9.1}) follows from (\re{WP9.7}) and (\re{WP5.14}), 
(\re{WPCS'}) for $M_*=3R_\rho+2\epsilon+1$ follows from (\re{WP9.12}), and
(\re{WP9.3}) follows from (\re{WPCS'}) and (\re{WP9.13}).

It remains to prove (\re{WPA1}). We deduce the estimate from the decay (\re{WZ-dec})
for the linearized dynamics $U(t)$  and decay (\re{Z-dec}) for $W(t)$.
Denote  $U(\tau)K_0=(\Psi(x,\tau),\Pi(x,\tau), Q(\tau), P(\tau))$.
From \cite [formulas (4.18), (4.19), (4.25)]{KK16} it follows that
$$
\left(\ba{c} Q(\tau) \\ P(\tau)\ea\right)
={\cal L}*\left(\ba{c}
0 \\ f_k\ea\right)(\tau)
$$
where 
$$
f_k(\tau)=\langle W(\tau)[\phi_k(t_*),\dot\phi_k(t_*)],\na\rho\rangle
=\langle W(\tau+t_*)[\phi_0,\pi_0],\na\rho\rangle,
$$
and
$$
(\Psi(\tau),\Pi(\tau))=W(\tau+t_*)[\phi_0,\pi_0]+\int_0^\tau W(\tau-s)[0, Q(s)\cdot\na\rho]ds
$$
Moreover, according to \cite [formula (4.20)]{KK16} for ${\cal L}(t)$ the decay holds
$$
{\cal L}(t)={\cal O}(|t|)^{-N},\quad t\to\infty,\quad\forall N>0.
$$
Then the decay (\re{WPA1}) follows.
\end{proof}

 \setcounter{equation}{0}
 \section{The rate of convergence }\label{rc-sec}
Here we prove  Theorem \re{WPC} i). Due to (\re{WP9.1}) it suffices to prove that for any $\ve>0$
\be\la{WPA2}
 \Vert Y_*(t)-S_0\Vert_{-\al}={\cal O}(t^{-\al+\ve}),\quad t\to\infty.
\ee
Denote
$X(\tau)=Y_*(t_*+\tau)-S_0$. Then $X(0)=K_0+Z_0$ and the integrated version of  (\re{WPAB}) reads
\begin{equation}\la{WPkolb'}
 X(\tau)=U_0(\tau)K_0+U_0(\tau)Z_0+\int^\tau_0 ds\,U_0(\tau-s) B(X(s)),\quad\tau >0.
\end{equation}
Further,  (\re{Z-dec}), (\re{WPBB}), (\re{WPCS'}) and (\re{WPA1})  imply 
\be\la{WPiin'}
 {\|X(\tau)\|}_{-\al} \le C\bigg((t_*+\tau+1)^{-\al}+(1+\tau)^{-\al}{\|Z_0\|}_{\al} + \int^\tau_0 ds \,(1+\tau-s)^{-\al}
 {\|X(s)\|}^2_{-\al}\bigg),\quad\tau >0.
\ee
We fix an arbitrary  $\ve\in (0,1/2)$ and introduce the majorant
\be\la{maj}
m(t)=\sup_{0\leq s\leq t}(1+s)^{\al-\ve}\Vert X(s)\Vert_{-\al}.
\ee
Let $\mu$ be any fixed positive number, and  let $T_\mu$ be the exit time:
\be\la{T}
T_\mu=\sup \{t>0:m(t)\le \mu\}.
\ee
Multiplying both sides of (\re{WPiin'}) by $(1+\tau)^{\al-\ve}$,
and taking the supremum in $\tau\in[0,T_\mu]$, we get
\be\la{WPiin1'}
 m(\tau)\le C\bigg(\fr{(1+\tau)^{\al-\ve}} {(1+t_*+\tau)^{\al}}+
  \de + \int^\tau_0 ds\fr{(1+\tau)^{\al-\ve}}{(1+\tau-s)^{\al}}\fr{m^2(s)}
{(1+s)^{2\al-2\ve}}\bigg),\quad \tau\le T_\mu.
 \ee
Note  that for every $\ve>0$
\be\la{WPt*}
 \sup_{\tau>0}\fr{(1+\tau)^{\al-\ve}} {(1+t_*+\tau)^{\al}}
\to 0,\quad t_*\to\infty.
\ee
Hence taking into account that $m(t)$ is a monotone
increasing function, we get for sufficiently large $t_*$ that
\be\la{mest}
m(\tau)\le C(\de+Cm^2(\tau)),\quad\tau\le T_\mu.
\ee
This inequality implies that $m(\tau)$ is bounded for $\tau\le T_\mu$, and moreover,
\be\la{m2est}
m(\tau)\le C_1\de,\quad\tau\le T
\ee
if $\de$ is sufficiently small.
The constant $C_1$ in  (\re{m2est}) does not depend on $T$.
Due to  Lemma \re{WPl9.1}  we can choose $t_*$   so large that $\de<\mu/(2C_1)$.
Then (\re{m2est}) implies that  $T=\infty$ and (\re{m2est}) holds for all $\tau >0$ if $t_*$ is sufficiently large.
{\hfill $\Box$}

\section{Scattering asymptotics}\label{as-sec}
\setcounter{equation}{0}
Here we prove Theorem \re{WPC} ii). 
We prove  asymptotics  (\re{S})--(\re{rm}) for $t\to+\infty$ only since system (\re{WP2}) is time reversible. 
Denote $\Phi(x,t)=(\Phi_1(x,t),\Phi_2(x,t))=(\varphi(x,t),\pi(x,t))-(s_{q_+},0)$. 
Then asymptotics (\re{S})-- (\re{rm}) are equivalent to
\[ 
\Phi(t)=W(t)\Phi_++r(t),\quad \Vert r(t)\Vert_{\Ho^1\oplus L^2}=
{\cal O}(t^{-\al+1+\ve}),\quad t \to+\infty,
\]
This is equivalent to  
\be\la{Sme} 
W(-t)\Phi(t)=\Phi_+ +r_1(t),~~~~~ 
\Vert r_1(t)\Vert_{\Ho^1\oplus L^2}={\cal O}(t^{-\al+1+\ve}), \quad t \to+\infty
\ee 
due to the unitarity of  $W(t)$  on  $\Ho^1\oplus L^2$. 
The first two equations of (\re{WP2}) imply  
$$ 
\dot \Phi_1(x,t)=\Phi_2(x,t),\quad \dot \Phi_2(x,t)=\Delta \Phi_1(x,t)+\rho(x-q_+)-\rho(x-q(t)).
$$ 
Then 
\be\la{eqacc} 
\Phi(t)=W(t)\Phi(0)-\int_0^tW(t-s)[(0,\rho(x-q_+)-\rho(x-q(s)))]ds. 
\ee 
Therefore,
\be\la{duhs} 
W(-t)\Phi(t)= \Phi(0)-\int_0^t W(-s)R(s)ds,\qquad R(s)=(0,\rho(x-q_+)-\rho(x-q(s)),  
\ee 
where the integral   converges in  $\Ho^1\oplus L^2$ with the rate ${\cal O}(t^{-\al+1+\ve})$. Indeed, 
\[
\Vert  W(-s)R(s)\Vert_{\Ho^1\oplus L^2} ={\cal O}(s^{-\al+\ve}),\quad 0<\ve<\al-1
\]
by the unitarity of $W(-s)$ and the decay rate 
$\Vert R(s)\Vert_{\Ho^1\oplus L^2}={\cal O}(s^{-\al+\ve})$ 
which follows   from  the conditions (\re{C}) on $\rho$ and the asymptotics (\re{WPEE1}).
Setting
\[
\Phi_+=\Phi(0)-\int_0^\infty W(-s)R(s)ds,\quad r_1(t)=\int_t^\infty W(-s)R(s)ds,
\]
we obtain (\re{Sme}).


\begin{thebibliography}{99}

\bibitem{Abr2} M. Abraham, 
Prinzipien der Dynamik des Elektrons, Annalen der Physik {\bf 10} (1903), 105-179.


\bibitem{BP1993}
 V.S. Buslaev, G.S. Perelman,
 Scattering for the nonlinear {S}chr\"odinger equation: states that
  are close to a soliton,
 \emph{Algebra i Analiz}, \textbf{4} (1992), 63-102.

\bibitem{BP1995}
 V.S. Buslaev, G.S. Perelman,
 On the stability of solitary waves for nonlinear {S}chr\"odinger
  equations,  \emph{Nonlinear evolution equations}, vol. 164 of Amer. Math. Soc.
  Transl. Ser. 2,
Amer. Math. Soc., Providence, RI, 1995,
75-98.

\bibitem{BS2003}
 V.S. Buslaev, C. Sulem,
 On asymptotic stability of solitary waves for nonlinear
  {S}chr\"odinger equations,
 \emph{Ann. Inst. H. Poincar\'e Anal. Non Lin\'eaire}, \textbf{20}
  (2003), 419-475.





\bibitem{FG2014} 
J. Fr\"ohlich, Z. Gang,
Emission of Cherenkov radiation as a mechanism for Hamiltonian friction,
{\em Advances in Mathematics} {\bf 264} (2014), 183-235
 
\bibitem{Jackson1962}       
J. D. Jackson, 
Classical Electrodynamics, John Wiley \& Sons, New York, NY, 1962.


\bibitem{IKS2004a} V. Imaikin, A.Komech, H. Spohn, 
Scattering theory for a particle coupled to a scalar field, 
{\em Journal of Discrete and Continuous Dynamical Systems} {\bf 10} (2004), no. 1-2, 387-396. 

\bibitem{IKM2003} V. Imaikin, A.Komech, P. Markowich, 
Scattering of solitons of the Klein-Gordon equation coupled to a classical particle, 
{\em Journal of Mathematical Physics} {\bf 44} (2003), no. 3, 1202-1217. 

\bibitem{IKS2002}  V. Imaikin, A. Komech, H. Spohn, 
Soliton-like asymptotics and scattering for a particle coupled to Maxwell field, 
{\em Russian Journal of Mathematical Physics} {\bf 9} (2002), no. 4, 428-436. 
 
\bibitem{IKS2004b} V. Imaikin, A.Komech, H. Spohn, 
Rotating charge coupled to the Maxwell field: scattering theory and adiabatic limit, 
{\em Monatshefte f\"ur Mathematik} {\bf 142} (2004), no. 1-2, 143-156. 

\bibitem{IKM2004} V. Imaikin, A. Komech, N. Mauser, 
Soliton-type asymptotics for the coupled Maxwell-Lorentz equations, 
{\em Ann. Inst. Poincar\'e, Phys. Theor.} {\bf 5} (2004), 1117-1135. 

\bibitem{K2003} 
A.I. Komech, 
On attractor of a singular nonlinear $U(1)$-invariant Klein-Gordon equation, p. 599-611 in:
Proc. $3^{rd}$ ISAAC Congress, Berlin, 2003. 
 

\bibitem{KK2007} 
A.I. Komech, A.A. Komech, 
Global attractor for a~nonlinear oscillator coupled to the Klein-Gordon field, 
{\em Arch. Rat. Mech. Anal.} {\bf 185} (2007), 105-142. 
 
 
\bibitem{KK2010a} A.I. Komech, A.A. Komech, 
On global attraction to solitary waves for the Klein-Gordon field coupled to several nonlinear oscillators, 
{\em J. Math. Pures App.}, {\bf 93} (2010), 91-111. 
 
\bibitem{KK2009} A.I. Komech, A.A. Komech, 
Global attraction to solitary waves for Klein- Gordon equation with mean field interaction, 
{\em Annales de l'IHP-ANL} {\bf 26} (2009), no. 3, 855-868. 
 
 \bibitem{KK2010b} A.I. Komech, A.A. Komech, 
Global attraction to solitary waves for nonlinear Dirac equation with mean field interaction, 
{\em SIAM J. Math. Analysis},  {\bf 42} (2010), no. 6, 2944-2964. 


\bibitem{K2013} 
A.I. Komech, 
Quantum Mechanics: Genesis and Achievements, Springer, Dordrecht, 2013.

\bibitem{K2016} 
A.I. Komech, Attractors of nonlinear Hamilton PDEs,
{\em Discrete and Continuous Dynamical Systems A} {\bf 36} (2016), no. 11,   6201-6256.



 \bibitem{KK2012}
A. I. Komech, E. A. Kopylova, 
Dispersion Decay and Scattering Theory, Wiley \& Sons, Hoboken, 2012. 

\bibitem{KS1998}  A. Komech, H. Spohn, 
Soliton-like asymptotics for a classical particle interacting with a scalar wave field, 
 {\em Nonlinear Analysis}  {\bf 33} (1998), no. 1, 13-24. 


 \bibitem{KK16}
E. A. Kopylova, A. I. Komech,
Asymptotic atability of stationary states in the wave equation coupled to a nonrelativistic particle,
{\em Russian J. Math. Phys.} {\bf 23} (2016), no. 1, 93-100.


\bibitem{KSK}
A. Komech, H. Spohn, M. Kunze, 
Long-time asymptotics for a classical particle interacting with a scalar wave field,
{\em Comm. in PDEs} {\bf 22} (1997), no. 1-2, 307-335.


\bibitem{KS00}
A. Komech, H. Spohn,
Long-time asymptotics for the coupled Maxwell-Lorentz equations,
{\em Comm. PDE} {\bf 25} (2000), no. 3-4, 558-585.


\bibitem{3w}
E. Kopylova,
Weighted energy decay for 3D wave equation,
{\em Asymptotic Anal.} {\bf 65} (2009), no. 1-2, 1-16.
 
\bibitem{Ru}
W. Rudin, 
``Functional Analysis'', McGraw Hill, New York, 1977.

\bibitem{Sig} 
I. M. Sigal, 
Nonlinear wave and Schr\"odinger equations. I. Instability of periodic and quasiperiodic solutions, 
{\em Comm. Math. Phys.} {\bf   153} (1993),  no. 2, 297-320.

\bibitem{SW3} 
A. Soffer, M. I. Weinstein, 
 Resonances, radiation damping and instability in Hamiltonian nonlinear wave equations, 
 {\em Invent. Math.} {\bf  136}   (1999), no. 1, 9-74. 
 
\bibitem{Sp04}
H. Spohn,
Dynamics of Charged Particles and Their Radiation Field, Cambridge University Press, 2004.


\bibitem{SW1990}
 A. Soffer, M.I. Weinstein,
Multichannel nonlinear scattering for nonintegrable equations,
\emph{Comm. Math. Phys.}, \textbf{133} (1990), 119-146.

\bibitem{SW1992}
A. Soffer, M.I. Weinstein,
 Multichannel nonlinear scattering for nonintegrable equations. {II}.
  The case of anisotropic potentials and data,
\emph{J. Differential Equations}, \textbf{98} (1992), 376-390.



\end{thebibliography}
\end{document}